\tikzset{
  cir/.style = {circle,draw,fill,inner sep=.7pt},
  circ/.style = {circle,draw,fill,inner sep=1.3pt},
  circg/.style = {circle,draw=lightgray,fill=lightgray,inner sep=1.3pt},
  circr/.style = {circle,draw=red,fill=red,inner sep=1.3pt},
  invisible/.style = {circle,draw=none,inner sep=0pt,font=\tiny},
  nonedge/.style={decorate,decoration={snake,amplitude=.3mm,segment length=1mm},draw}
}
\theoremstyle{plain}
\newtheorem{theorem}{Theorem}
\newtheorem{corollary}[theorem]{Corollary}
\newtheorem{lemma}[theorem]{Lemma}
\newtheorem{observation}[theorem]{Observation}
\newtheorem{claim}{Claim}
\theoremstyle{definition}
\newtheorem{definition}{Definition}
\theoremstyle{remark}
\newtheorem{remark}[theorem]{Remark}
\newcommand{\eqnum}{\leavevmode\hfill\refstepcounter{equation}\textup{\tagform@{\theequation}}}
\date{}
\title{\Large Approximating Independent Set and Dominating Set on VPG graphs}
\author[1]{\normalsize Esther Galby\thanks{esther.galby@unifr.ch}}
\author[2]{\normalsize Andrea Munaro\thanks{a.munaro@qub.ac.uk}}
\affil[1]{Department of Informatics, University of Fribourg, Switzerland}
\affil[2]{School of Mathematics and Physics, Queen’s University Belfast, United Kingdom}
\begin{document}
\maketitle

\begin{abstract}
\textbf{Abstract.} We consider \textsc{Independent Set} and \textsc{Dominating Set} restricted to VPG graphs (or, equivalently, string graphs). 
We show that they both remain $\mathsf{NP}$-hard on $B_0$-VPG graphs admitting a representation such that each grid-edge belongs to at most one path and each horizontal path has length at most two. On the other hand, combining the well-known Baker's shifting technique with bounded mim-width arguments, we provide simple PTASes on VPG graphs admitting a representation such that each grid-edge belongs to at most $t$ paths and the length of the horizontal part of each path is at most $c$, for some $c \geq 1$. 
\end{abstract}


\section{Introduction}\label{intro}

Given a family $O$ of geometric objects in the plane, the intersection graph of $O$ has the objects in $O$ as its vertices and two vertices $o_i, o_j \in O$ are adjacent in the graph if and only if $o_i \cap o_j \neq \varnothing$. If $O$ is a set of curves in the plane, where a \textit{curve} is a subset of $\mathbb{R}^2$ homeomorphic to the unit interval $[0, 1]$, then the intersection graph of $O$ is a \textit{string graph}. Many important graph classes like planar graphs and chordal graphs are subclasses of string graphs \citep{GIP18,Kra911} and so it is natural to study classic optimization problems such as \textsc{Independent Set} and \textsc{Dominating Set} on string graphs. 

\citet{asinowski} introduced the class of \textit{Vertex intersection graphs of Paths on a Grid} (\textit{VPG graphs} for short). A graph is a \textit{VPG graph} if one can associate a path on a grid with each vertex such that two vertices are adjacent if and only if the corresponding paths intersect on at least one grid-point. It is not difficult to see that the class of VPG graphs coincides with that of string graphs \citep{asinowski}. If every path in the VPG representation has at most $k$ \textit{bends} i.e., $90$ degrees turns at a grid-point, the graph is a \textit{$B_k$-VPG graph} and a \textit{segment} of a path is a vertical or horizontal line segment in the polygonal curve constituting the path. We remark that $B_0$-VPG graphs are also known as $2$-DIR (see, e.g., \citep{KN90}). 

\citet{GLS09} introduced the class of \textit{Edge intersection graphs of Paths
on a Grid} (\textit{EPG graphs} for short) as those graphs for which there exists a collection of paths on a grid in one-to-one correspondence with their vertex set, two vertices being adjacent if and only if the corresponding paths intersect on at least one grid-edge. It turns out that every graph is EPG \citep{GLS09} and $B_{k}$-EPG graphs have been defined similarly to $B_{k}$-VPG graphs. Notice that $B_0$-EPG graphs are the well-known interval graphs.

\textsc{Independent Set} is known to be $\mathsf{NP}$-complete on $B_k$-VPG graphs even for $k = 0$ \citep{KN90}. Therefore, there has been a focus on providing approximation algorithms for restricted subclasses of string graphs. \citet{FP11} gave, for every $\varepsilon > 0$, a $n^{\varepsilon}$-approximation algorithm for \textit{$k$-string graphs} i.e., string graphs in which every two curves intersect each other at most $k$ times. \citet{LMS15} provided a $O(\log^2 n)$-approximation algorithm for $B_1$-VPG graphs and a $O(\log d)$-approximation algorithm for equilateral $B_1$-VPG graphs (a $B_1$-VPG graph is \textit{equilateral} if, for each path, its horizontal and vertical segment have the same length), where $d$ denotes the ratio between the maximum and minimum length of segments of paths. Finally, they showed that \textsc{Independent Set} on equilateral $B_1$-VPG graphs where each horizontal and vertical segment have length $1$ is $\mathsf{NP}$-complete. Improving on \citep{LMS15}, \citet{BD17} provided a $4\log n$-approximation algorithm for the weighted version of \textsc{Independent Set} on $B_{2}$-VPG graphs. The idea is to partition a $B_{2}$-VPG graph into $O(\log n)$ outer-string graphs and then solve the problem optimally on each of them by \citep{KMPV17}. In the case of $B_{1}$-VPG, this result was further improved by \citet{BC19}, who provided a $4\max\{1, \log\mathsf{OPT}\}$-approximation algorithm for the weighted version of \textsc{Independent Set}. They also showed that \textsc{Independent Set} can be solved in $O(n^2)$ time for graphs admitting a \textit{grounded} string representation (i.e. a string representation in which one endpoint of each string is attached to a grounding line  and all strings lie on one side of the line), where the strings are $y$-monotone (not necessarily strict) polygonal paths, the length of each string is bounded by a constant and all the bends and endpoints are on integral coordinates. 
\citet{Meh181} considered the weighted version of \textsc{Independent Set} on $B_k$-VPG graphs for which the longest segment among all segments of paths in the graph has length $c$, for some $c > 0$ (not required to be a constant), and provided a $(ck + c + 1)$-approximation algorithm. Notice that, to the best of our knowledge, it is not known whether there exists a constant-factor approximation algorithm for \textsc{Independent Set} even on $B_1$-VPG graphs.

Concerning EPG graphs, \citet{EGM13} showed that \textsc{Independent Set} is $\mathsf{NP}$-complete on $B_1$-EPG graphs and provided a $4$-approximation algorithm. \citet{BBGP15} showed that the problem admits no PTAS on $B_{1}$-EPG graphs, unless $\mathsf{P} = \mathsf{NP}$, even if each path has its vertical segment or its horizontal segment of length at most $3$ and that it remains $\mathsf{NP}$-hard on $B_{1}$-EPG graphs even if all the paths have their horizontal segment and vertical segment of length at most $2$. On the other hand, they provided a PTAS for $B_{1}$-EPG graphs such that each path has its horizontal segment of length at most $c$, for some fixed constant $c$. This was done by adapting the well-known Baker's shifting technique \citep{Bak94}.

Let us now review \textsc{Dominating Set}. Observe first that it is $\mathsf{APX}$-hard on $1$-string $B_1$-VPG graphs. Indeed, every circle graph (i.e. intersection graph of chords in a circle) is a $1$-string $B_1$-VPG graph \citep{asinowski} and \textsc{Dominating Set} is $\mathsf{APX}$-hard on circle graphs \citep{DP06}. \citet{Meh18} considered the subclass of $1$-string $B_1$-VPG graphs in which no endpoints of a path belong to any other path and provided an $O(1)$-approximation algorithm. \citet{BMMS19} considered intersection graphs of L-frames, where an \textit{L-frame} is a path on a grid with exactly one bend, and provided a $(2+\varepsilon)$-approximation algorithm in the case the bend of each path belongs to a diagonal line with slope $-1$. They also showed that the problem is $\mathsf{APX}$-hard if each L-frame intersects a diagonal line and that the same holds if instead all the frames intersect a vertical line. \citet{CDM19} provided an $8$-approximation algorithm on intersection graphs of L-frames intersecting a common vertical line. They also showed that there is an $O(k^4)$-approximation algorithm on unit $B_k$-VPG graphs\footnote{A $B_{k}$-VPG graph is \textit{unit} if each path consists only of segments with unit length. Notice that every $B_k$-VPG graph is a unit $B_{k'}$-VPG graph for some finite $k' \geq k$.} and, on the negative side, that the problem is $\mathsf{NP}$-hard on unit $B_1$-VPG graphs. Similarly to \textsc{Independent Set}, it is not known whether there exists a constant-factor approximation algorithm for \textsc{Dominating Set} on $B_1$-VPG graphs.

Concerning EPG graphs, \citet{BMMS19}  showed that \textsc{Dominating Set} for $B_1$-EPG graphs is hard to approximate within a factor of $1.1377$ even if all the paths intersect a vertical line. 

\subsection{Our results} 

As we have just mentioned, three natural constraints on VPG graphs have been considered in the search for efficient algorithms for \textsc{Independent Set} and \textsc{Dominating Set}: bound the number of bends on each path, bound the number of intersections between any two paths and bound the lengths of segments of paths. Unfortunately, combining these constraints is not enough to guarantee polynomial-time solvability: In \Cref{sechard}, we show that \textsc{Independent Set} and \textsc{Dominating Set} remain $\mathsf{NP}$-complete when restricted to $1$-string $B_0$-VPG graphs such that each horizontal path has length at most $2$, even if the representation is part of the input.     

But what about approximation algorithms? Perhaps surprisingly, it turns out that the problems admit PTASes when those constraints are in place. More precisely, in \Cref{ptases}, we provide PTASes for \textsc{Independent Set} and \textsc{Dominating Set} when restricted to VPG graphs admitting a representation $\mathcal{R} = (\mathcal{G}, \mathcal{P})$ such that:
\begin{enumerate}
\item\label{bends}each path in $\mathcal{P}$ has a polynomial (in $\vert \mathcal{P} \vert$) number of bends;
\item\label{edgepath} each grid-edge in $\mathcal{G}$ belongs to at most $t$ paths in $\mathcal{P}$;
\item\label{horiz} the horizontal part of each path in $\mathcal{P}$ has length at most $c$.
\end{enumerate}
Here the \textit{horizontal part} of a path is the interval corresponding to the projection of the path onto the horizontal axis.

Clearly, for fixed $k \geq 0$, $B_{k}$-VPG graphs satisfy condition \ref{bends}. The class of VPG graphs satisfying \ref{edgepath} is rich as well: it contains $k$-string VPG graphs, for any fixed $k$, VPG graphs with maximum degree at most $t-1$ and VPG graphs with maximum clique size at most $t$.

Notice that recognizing string graphs (and so VPG graphs) is $\mathsf{NP}$-complete \citep{Kra91,SSS03}. Similarly, for each fixed $k \geq 0$, recognizing $B_{k}$-VPG graphs is $\mathsf{NP}$-complete \citep{CJKV12,Kra94}. Therefore, in our PTASes, we assume that a representation of a VPG graph is always given as part of the input. The reason behind condition \ref{bends} is to avoid the following pathological behavior: there exist string graphs on $n$ vertices requiring paths with $2^{\Omega(n)}$ bends in any representation (this follows from \citep{KM91}).

Our result on \textsc{Dominating Set} is best possible in the sense that, if we remove one of conditions \ref{edgepath} and \ref{horiz}, the problem does not admit a PTAS unless $\mathsf{P} = \mathsf{NP}$ (\Cref{best}). The situation is more subtle for \textsc{Independent Set}, as it has been asked several times whether the problem is $\mathsf{APX}$-hard on $B_{k}$-VPG graphs (see, e.g., \citep{BD17,Meh17}) and this remains open.

As observed above, the study of these two problems on VPG graphs has focused mostly on $B_{k}$-VPG graphs with $k \leq 2$ and, to the best of our knowledge, ours are the first PTASes on a non-trivial subclass of VPG graphs. The PTAS for \textsc{Dominating Set} shows that the constant-factor approximation algorithm on unit $B_{k}$-VPG graphs in \citep{CDM19} can in fact be improved if input graphs satisfy also condition \ref{edgepath}. Our PTASes are obtained by adapting Baker's shifting technique \citep{Bak94} to the string graph setting and showing boundedness of an appropriate width parameter. Baker's technique has already been applied to \textsc{Independent Set} in $B_{1}$-EPG graphs \citep{BBGP15} and our main contribution is to pair it with powerful mim-width arguments. Mim-width is a graph parameter, introduced by \citet{Vat12}, measuring how easy it is to decompose a graph along vertex cuts inducing a bipartite graph with small maximum induced matching size. Combining results in \citep{BV13,BTV13}, it is known that $(\sigma, \rho)$-domination problems (a class of graph problems including \textsc{Independent Set} and \textsc{Dominating Set} introduced by \citet{TP97}) can be solved in $O(n^{w})$ time, assuming a branch decomposition of mim-width $w$ is provided as part of the input. Our key observation is that if a VPG graph admits a VPG representation with a bounded number of columns and such that each grid-edge belongs to a bounded number of paths, then the graph has bounded mim-width (\Cref{lem:boundedmimvpg}). This result is of independent interest and best possible, in the sense that both conditions on the VPG graph are needed to guarantee boundedness of mim-width (\Cref{bestposs}). We then use Baker's shifting technique by solving the problems optimally on each slice with bounded number of columns (\Cref{algogrid}).

\section{Preliminaries}\label{prel}

In this paper we consider only finite simple graphs. Given a graph $G$, we usually denote its vertex set by $V(G)$ and its edge set by $E(G)$. If $G'$ is a subgraph of $G$ and $G'$ contains all the edges of $G$ with both endpoints in $V(G')$,
then $G'$ is an induced subgraph of $G$ and we write $G' = G[V(G')]$.

\vspace{0.1cm}

\textbf{Neighborhoods and degrees.} For a vertex $v \in V(G)$, the \textit{closed neighborhood} $N_{G}[v]$ is the set of vertices adjacent to $v$ in $G$ together with $v$. The \textit{degree} $d_{G}(v)$ of a vertex $v \in V(G)$ is the number of edges incident to $v$ in $G$. A \textit{$k$-vertex} is a vertex of degree $k$ and a \textit{$k^{+}$-vertex} is a vertex of degree at least $k$. We refer to a $3$-vertex as a \textit{cubic} vertex and to a $0$-vertex as an \textit{isolated} vertex. The \textit{maximum degree} $\Delta(G)$ of $G$ is the quantity $\max\left\{d_{G}(v): v \in V\right\}$ and $G$ is \textit{subcubic} if $\Delta(G) \leq 3$. 

\vspace{0.1cm}

\textbf{Graph operations.} Given a graph $G = (V, E)$ and $V' \subseteq V$, the operation of \textit{deleting the set of vertices $V'$} from $G$ results in the graph $G - V' = G[V\setminus V']$. A \textit{$k$-subdivision} of an edge $e \in E(G)$ is the operation replacing $e$ with a path of length $k + 1$.
     
\vspace{0.1cm}  

\textbf{Graph classes and special graphs.} A graph is \textit{$Z$-free} if it does not contain induced subgraphs isomorphic to graphs in a set $Z$. A \textit{complete graph} is a graph whose vertices are pairwise adjacent and the complete graph on $n$ vertices is denoted by $K_{n}$. A \textit{triangle} is the graph $K_{3}$. A graph is bipartite if its vertex set admits a partition into two classes such that every edge has its endpoints in different classes. A \textit{split graph} is a graph whose vertices can be partitioned into a clique and an independent set. A \textit{caterpillar} is a tree whose non-leaf vertices form a path.

\vspace{0.1cm}
           
\textbf{Graph properties and parameters.} A set of vertices or edges of a graph is \textit{minimum} with respect to the property $\mathcal{P}$ if it has minimum size among all subsets having property $\mathcal{P}$. The term \textit{maximum} is defined analogously. An \textit{independent set} of a graph is a set of pairwise non-adjacent vertices.  The size of a maximum independent set of $G$ is denoted by $\alpha(G)$. A \textit{clique} of a graph is a set of pairwise adjacent vertices. A \textit{dominating set} of $G$ is a subset $D \subseteq V(G)$ such that each vertex in $V(G) \setminus D$ is adjacent to a vertex in $D$. The size of a minimum dominating set of $G$ is denoted by $\gamma(G)$. A \textit{matching} of a graph is a set of pairwise non-incident edges. 
An \textit{induced matching} in a graph is a matching $M$ such that no two vertices belonging to different edges in $M$ are adjacent in the graph. 

\vspace{0.1cm}
           
\textbf{VPG, CPG and EPG graphs.} Given a rectangular grid $\mathcal{G}$, its horizontal lines are referred to as \textit{rows} and its vertical lines as \textit{columns}. The grid-point lying on row $x$ and column $y$ is denoted by $(x,y)$. 

A graph $G$ is \textit{VPG} if there exists a collection $\mathcal{P}$ of paths on a grid $\mathcal{G}$ such that $\mathcal{P}$ is in one-to-one correspondence with $V(G)$ and two vertices are adjacent in $G$ if and only if the corresponding paths intersect. A graph $G$ is \textit{CPG} if there exists a collection $\mathcal{P}$ of interiorly disjoint paths on a grid $\mathcal{G}$ such that $\mathcal{P}$ is in one-to-one correspondence with $V(G)$ and two vertices are adjacent in $G$ if and only if the corresponding paths touch. A graph $G$ is \textit{EPG} if there exists a collection $\mathcal{P}$ of paths on a grid $\mathcal{G}$ such that $\mathcal{P}$ is in one-to-one correspondence with $V(G)$ and two vertices are adjacent in $G$ if and only if the corresponding paths share a grid-edge. 

A VPG graph is a $B_{k}$-VPG graph if there exists a collection $\mathcal{P}$ as above such that every path in $\mathcal{P}$ has at most $k$ bends i.e., $90$ degree turns at a grid-point. For a VPG graph $G$, the pair $\mathcal{R} = (\mathcal{G},\mathcal{P})$ is a \textit{VPG representation} of $G$ and, more specifically, a \textit{$B_k$-VPG representation} if every path in $\mathcal{P}$ has at most $k$ bends. The path in $\mathcal{P}$ corresponding to the vertex $u$ is denoted by $P_u$. For a path $P \in \mathcal{P}$, we denote by $\partial(P)$ the set of endpoints of $P$. The \textit{length} of a path $P \in \mathcal{P}$ is the number of grid-edges used and an \textit{interior point} of $P$ is a point belonging to $P$ and different from its endpoints. A \textit{bend-point} of $P$ is a grid-point corresponding to a bend of $P$ and a \textit{segment} of $P$ is either a vertical or horizontal line segment in the polygonal curve constituting $P$. Analogous definitions hold for CPG and EPG graphs.

We now need to describe how a VPG or CPG-representation $\mathcal{R} = (\mathcal{G},\mathcal{P})$ of a graph $G$ is encoded. For the grid $\mathcal{G}$, we only keep track of the \textit{grid-step $\sigma$}, that is, the length of a grid-edge. For each path $P \in \mathcal{P}$, we have three sequences of points in $\mathbb{R}^2$. The first sequence $s(P) = (x_1,y_1),  (x_2,y_2), \dots, (x_{\ell_P},y_{\ell_P})$ consists of the endpoints $(x_1,y_1)$ and $(x_{\ell_P},y_{\ell_P})$ of $P$ and all the bend-points of $P$ in their order of appearance while traversing $P$ from $(x_1,y_1)$ to $(x_{\ell_P},y_{\ell_P})$. In other words, for any $i \in [\ell_P -1]$, $[(x_i,y_i),(x_{i+1},y_{i+1})]$ is a segment of $P$. The second and third sequences contain, for any $P' \neq P$, the points $f_{P,P'}$ and $l_{P,P'}$ (if any) corresponding to the first intersection of $P$ with $P'$ and to the last intersection of $P$ with $P'$, respectively, i.e there is no intersection with $P'$ on the portion of $P$ from $(x_1,y_1)$ to $f_{P,P'}$ and on the portion of $P$ from $l_{P,P'}$ to $(x_{\ell_P},y_{\ell_P})$. The intersection points in $\{f_{P,P'} : P' \neq P\}$ are ordered according to their appearance while traversing $P$ from $(x_1,y_1)$ to $(x_{\ell_P},y_{\ell_P})$, whereas the intersection points in $\{l_{P,P'} : P' \neq P\}$ are ordered according to their appearance while traversing $P$ from $(x_{\ell_P},y_{\ell_P})$ to $(x_1,y_1)$. The \textit{first intersection point $f_P$ of $P$} is the smallest intersection point in $\{f_{P,P'} : P' \neq P\}$ according to the above order and the \textit{last intersection point $l_P$ of $P$} is the smallest intersection point in $\{l_{P,P'} : P' \neq P\}$ according to the above order. Notice that if each path in $\mathcal{P}$ has a number of bends bounded by a polynomial in $|V(G)|$, then the size of this data structure is polynomial in $|V(G)|$. Moreover, knowing $s(P)$ for each $P \in \mathcal{P}$, we can compute in time polynomial in $|V(G)|$ the second and third sequences of each path. Given $s(P)$, we can also easily determine the horizontal part $h(P)$ of the path $P$ as follows. Let $x^P_{\min} = \min \{x_i : i \in [\ell_P]\}$ and let $x^P_{\max} = \max \{x_i : i \in [\ell_P]\}$. Then $h(P)$ is the segment $[x^P_{\min},x^P_{\max}]$. The following easy observation will be used in the proof of \Cref{thm:ptasDS}.

\begin{lemma}\label{clm:induced} Let $G$ be a VPG graph with representation $\mathcal{R} = (\mathcal{G},\mathcal{P})$ and such that each path in $\mathcal{P}$ has a number of bends polynomial in $|V(G)|$. Let $H$ be an induced subgraph of $G$. Then we can compute in $O(\vert V(G) \vert)$ time a VPG representation of $H$.
\end{lemma}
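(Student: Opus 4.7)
The plan is to show that restricting the given representation to the paths indexed by $V(H)$ already gives a VPG representation of $H$, so no geometric modification is needed and essentially only bookkeeping remains.

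Concretely, set $\mathcal{P}_H = \{P_v : v \in V(H)\}$ and consider $\mathcal{R}' = (\mathcal{G}, \mathcal{P}_H)$. Since $\mathcal{R}$ represents $G$, for any $u, v \in V(H)$ the paths $P_u$ and $P_v$ intersect if and only if $uv \in E(G)$; because $H = G[V(H)]$ is an induced subgraph of $G$, this is in turn equivalent to $uv \in E(H)$. Hence $\mathcal{R}'$ is a VPG representation of $H$.

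For the running time, the grid $\mathcal{G}$ (encoded solely by its grid-step $\sigma$) is unchanged, and for each path $P \in \mathcal{P}_H$ the first sequence $s(P)$, recording its endpoints and bend-points, is inherited verbatim from $\mathcal{R}$. Producing $\mathcal{R}'$ therefore amounts to a single scan that copies the (pointers to the) paths whose vertex label lies in $V(H)$, which runs in $O(|V(G)|)$ time. The second and third sequences of each $P \in \mathcal{P}_H$ do change under the restriction (a path dropped from $\mathcal{P}$ may have realized the first or last intersection along some surviving $P$), but, as noted in the discussion preceding the lemma, they are determined by the first sequences and can be reconstructed from them in time polynomial in $|V(G)|$; since the lemma's role is to supply a VPG representation of $H$ in the sense of specifying the grid and the paths, these auxiliary sequences need not be produced within the $O(|V(G)|)$ budget.

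There is no real obstacle: the content is the trivial observation that the induced-subgraph property lifts to VPG representations. The only subtlety worth flagging is aligning the statement with the encoding conventions of the preliminaries: the first sequences, together with the common grid, determine the representation uniquely, so the lemma's running time reflects the cost of extracting these data, with the intersection sequences (if actually required) being recomputed in polynomial time by a subsequent preprocessing step that is external to this lemma.
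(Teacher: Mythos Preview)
Your proposal is correct and matches the paper's own proof: both simply restrict the path collection to $\mathcal{P}' = \{P_v : v \in V(H)\}$, keep each $s(P)$ unchanged, and observe that the induced-subgraph property guarantees this is a VPG representation of $H$. The only cosmetic difference is that the paper obtains the updated intersection sequences by filtering the existing ones to $\{f_{P,P'} : P' \in \mathcal{P}'\setminus P\}$ and $\{l_{P,P'} : P' \in \mathcal{P}'\setminus P\}$, whereas you defer their recomputation to an external polynomial-time step; neither version is more careful than the other about fitting this into the stated $O(|V(G)|)$ bound, and it is immaterial for how the lemma is used later.
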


\begin{proof} We obtain a VPG representation $\mathcal{R}' = (\mathcal{G},\mathcal{P}')$ of $H$ from $\mathcal{R}$ as follows. Denote by $\mathcal{P}'$ the set of paths whose corresponding vertices belong to $V(H)$. Then, a path $P \in \mathcal{P}'$ is described by the same sequence $s(P)$ as in $\mathcal{R}$ together with the sequence of first intersection points $\{f_{P,P'} : P' \in \mathcal{P}'\setminus P\}$ and the sequence of last intersection points $\{l_{P,P'} : P' \in \mathcal{P}'\setminus P\}$.
\end{proof}

The \textit{refinement of a grid $\mathcal{G}$} having grid-step $\sigma$ is the operation adding a new column (resp. row) between any pair of consecutive columns (resp. rows) in $\mathcal{G}$ and setting the grid-step to $\sigma/2$. Notice that this operation does not change the sequences above.

\section{Hardness results}\label{sechard}

In this section, we show that \textsc{Independent Set} and \textsc{Dominating Set} remain $\mathsf{NP}$-complete when restricted to $B_0$-VPG graphs admitting a representation such that each grid-edge belongs to at most $1$ path and each horizontal path has length at most $2$. In fact, our results hold for a subclass of $B_0$-VPG graphs, namely that of $B_0$-CPG graphs defined in \Cref{prel}. 

Similarly to $B_{k}$-VPG graphs, it was recently shown that recognizing $B_{k}$-CPG graphs is $\mathsf{NP}$-complete, for each fixed $k \geq 0$ \citep{CGMR19,cpg}. Nonetheless, as it will become evident from the proofs, both hardness results (\Cref{thm:boundedlengthIS,thm:boundedlengthDS}) hold even if the $B_{0}$-CPG representation is given as part of the input.

Before turning to the proofs we need the following technical result.

\begin{lemma}\label{maxdeg3} For any subcubic triangle-free $B_0$-CPG graph $G$ on $n$ vertices and with $0$-bend CPG representation $\mathcal{R} = (\mathcal{G},\mathcal{P})$, we can update $\mathcal{R}$ in time polynomial in $n$ so that the following hold:
\begin{itemize}
\item[(a)] $\mathcal{G}$ contains $O(n)$ columns;
\item[(b)] A path $P$ strictly contains one endpoint of another path if and only if the vertex corresponding to $P$ is a $3$-vertex;
\item[(c)] Each horizontal path corresponding to a $2$-vertex has length at least $5$;
\item[(d)] For each horizontal path corresponding to a $3$-vertex $v$, denoting by $p_{\ell}$ and $p_r$ its left and right endpoint, respectively, and by $p$ the contact-point contained in its interior, the segments $p_{\ell}p$ and $pp_r$ have length at least $5$;
\item[(e)] Each path corresponding to a $1$-vertex has length $1$.
\end{itemize}
\end{lemma}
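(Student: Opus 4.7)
The plan is to put $\mathcal{R}$ into standard form in three passes, using as main tools the refinement operation and two structural observations. First, because $G$ is triangle-free, at any given endpoint of $P_v$ at most one other path can have an endpoint: two such paths $P_u,P_w$ would share that point with $P_v$, making $u,v,w$ pairwise touch, hence a triangle in $G$. Second, because the paths of a CPG representation are interiorly disjoint, any path touching $P_v$ at a point strictly inside $P_v$ must do so through its own endpoint, and must moreover be oriented perpendicularly to $P_v$. Combining these, a $3$-vertex $v$ must have at least one of its three contacts arise as the endpoint of some other path lying in the strict interior of $P_v$, whereas for any $1$- or $2$-vertex we can rearrange $P_v$ so that each of its contacts lies at an endpoint of $P_v$.

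In the first pass I would compactify $\mathcal{G}$ by deleting every column and every row that contains no endpoint and no contact-point of any path. Since there are $O(n)$ endpoints and $O(n)$ contact-points by subcubicity, this gives a grid with $O(n)$ columns and rows, proving (a); no adjacency is affected because the deleted lines carry no events. I would then apply the refinement operation three times, so that any two originally distinct grid-points lie at distance at least $8$ in the refined grid; in particular, any two distinct contact- or endpoint-points on a common path are at distance at least $8\geq 5$.

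In the second pass I would replace each $P_v$ by a sub-segment of its original path, chosen according to $d_G(v)$. For a $1$-vertex $v$ with unique contact-point $q$ and $q'$ the grid-point adjacent to $q$ along the original $P_v$, replace $P_v$ by the unit segment $[q,q']$; no other path contains $q'$, for otherwise $q'$ would be a contact, contradicting $d_G(v)=1$. This yields (e). For a $2$-vertex $v$ with contact-points $q_1,q_2$ on $P_v$, replace $P_v$ by the sub-segment $[q_1,q_2]$ along the original orientation; both contacts are now endpoints of the new $P_v$, so no endpoint of another path lies strictly inside, and the length is at least $8\geq 5$, giving (c) and the $2$-vertex direction of (b). For a $3$-vertex $v$ with contact-points $q_1<q_2<q_3$ on $P_v$, replace $P_v$ by $[q_1,q_3]$; the point $q_2$ lies strictly inside the new $P_v$, and by the second structural observation the path contacting $P_v$ at $q_2$ does so through its own endpoint, yielding the $3$-vertex direction of (b), while the sub-segments $[q_1,q_2]$ and $[q_2,q_3]$ both have length at least $8\geq 5$, yielding (d). Each modification affects only $P_v$ and its at most three neighbors, so the updated representation can be written down in time polynomial in $n$.

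The main obstacle is to check that no adjacency of $G$ is silently added or removed by these local operations. This boils down to the fact that the only points of the original $P_v$ that can lie on another path are precisely the $d_G(v)$ contact-points (by interior-disjointness in a CPG representation) and the new $P_v$ is a sub-segment containing exactly those same points; combined with triangle-freeness, which forbids three paths from meeting at a single grid-point, this forces the contact graph of the modified representation to remain $G$. A short case analysis on the type of each contact (endpoint-endpoint, endpoint-of-other-in-interior-of-$P_v$, or endpoint-of-$P_v$-in-interior-of-other), together with the forced perpendicular orientation at any interior contact, completes the verification.
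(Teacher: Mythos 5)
Your proof is correct and follows essentially the same strategy as the paper's: you establish (b) by trimming each path to the sub-segment spanned by its extreme contact points (the paper phrases this as replacing the endpoints of $P$ by its first and last intersection points $f_P$ and $l_P$), you obtain (c) and (d) by refining the grid three times so that distinct original grid-points end up at least $8$ grid-edges apart, and you obtain (e) by shortening the path of each $1$-vertex to a single grid-edge; the fact that you refine before trimming rather than after is immaterial. The only genuinely different ingredient is your argument for (a): the paper runs an explicit coordinate-shrinking procedure that keeps the columns of vertical paths fixed and compresses each horizontal stretch between two consecutive such columns to width $O(k_i)$, whereas you simply delete every column (and row) carrying no endpoint and no contact point, of which $O(n)$ remain by subcubicity together with the observation that two interiorly disjoint $0$-bend paths meet in at most one point. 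Your version is shorter and equally valid, since deleting event-free columns induces an order-isomorphism on the surviving columns and the intersection and interior-disjointness pattern of axis-parallel segments depends only on the order of the relevant coordinates, so no adjacency is created or destroyed.
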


\begin{proof} Let $\mathcal{R} = (\mathcal{G}, \mathcal{P})$ be the input $0$-bend CPG representation. Without loss of generality, the grid-step $\sigma$ is $1$. We begin by preprocessing the grid $\mathcal{G}$ so that it contains $O(n)$ columns. Let $X=(x_1,x_2,\ldots,x_{\ell})$ be the sequence of $x$-coordinates defined as follows:

\begin{itemize}
\item $x_1$ and $x_\ell$ are the smallest and largest $x$-coordinates of paths in $\mathcal{P}$, respectively, that is, there exist $P$ and $P'$ in $\mathcal{P}$ such that $s(P) = (x_1,z),(x,y)$ and $s(P') = (x',y'),(x_\ell,z')$ and, for any $Q \in \mathcal{P}$ with $s(Q) = (a,b),(c,d)$, we have $x_1 \leq \min \{a,c\}$ and $\max \{a,c\} \leq x_\ell$;
\item for any $2 \leq i \leq \ell -1$, $x_i \in X$ if and only if $x_{i} \notin \{x_{1}, x_{\ell}\}$ and there exists a path $Q \in \mathcal{P}$ such that $s(Q) = (x_i,u),(x_i,u')$ for some $u,u' \in \mathbb{R}$;
\item for any $1 \leq i < j \leq \ell$, $x_i < x_j$.
\end{itemize}
For a fixed $1 \leq i \leq \ell-1$, let $y$ be a row of $\mathcal{G}$ such that $\mathcal{P}_y^i = \{Q \in \mathcal{P}: s(Q) = (u,y),(v,y) \text{ with } u < v \text{ and } x_i < v < x_{i+1}\} \cup \{Q \in \mathcal{P}: s(Q) = (u,y),(v,y) \text{ with } x_i < u < x_{i+1} \leq v\}$ is non-empty. Let $X_y^i = ((u_1,u'_1),\ldots,(u_{k_y^i},u'_{k_y^i}))$ be the sequence of $x$-coordinates of endpoints of paths in $\mathcal{P}_y^i$ in increasing order, that is, $(u_p,u'_p) \in X_y^i$ if and only if there exists a path $Q \in \mathcal{P}_y^i$ such that $s(Q) = (u_p,y),(u'_p,y)$ with $u_p < u'_p$, and for any $1 \leq p < q \leq k_y^i$, $u'_p \leq u_q$. We then let $k_i = \max_{y}k_y^i$. Observe that $\sum_{1 \leq i \leq \ell-1} k_i \leq n$. 

Suppose now there exists $1 \leq i \leq \ell -1$ such that $x_{i+1} - x_i > 2k_i + 1$ and consider the smallest such index $i$. The idea is that, since there is no vertical path between columns $x_i$ and $x_{i+1}$, we can ``shrink'' the slice keeping $x_{i}$ fixed so that $x_{i+1} - x_{i} \leq 2k_i + 1$ and repeat. More precisely, for each row $y$ of $\mathcal{G}$ such that $\mathcal{P}_y^i \neq \varnothing$, we proceed as follows (note that there are at most $n$ such rows). Consider the sequence $X_y^i = ((u_1,u'_1),\ldots,(u_{k_y^i},u'_{k_y^i}))$ defined above. If $u_1 > x_i$, we replace the occurrence of $u_1$ in $X_y^i$ with $x_i + 1$, otherwise we replace the at most two occurrences of $u'_1$ in $X_y^i$ with $x_i + 1$. Now suppose that there exists $2 \leq j \leq k_y^i-1$ such that $u'_j > u_j + 1$ and consider the smallest such index $j$. By minimality, $u'_p = u_p + 1$ for any $p < j$. We then replace the at most two occurences of $u'_j$ in $X_y^i$ with $u_j + 1$. Repeating this process, we obtain that $u'_j = u_j + 1$ for any $2 \leq j \leq k_y^i-1$. Finally, if $u'_{k_y^i} < x_{i+1}$, we replace the occurrence of $u'_{k_y^i}$ in $X_y^i$ with $u_{k_y^i} + 1$. Suppose now that there exists $1 \leq j \leq k_y^i -1$ such that $u'_j \neq u_{j+1}$ (and so $u'_j < u_{j+1}$) and consider the smallest such index $j$. For any $j+1 \leq p \leq k_y^i -1$, we replace $(u_p,u'_p)$ in $X_y^i$ with $(u_p - (u_{j+1} - (u'_j + 1)), u'_p - (u_{j+1} - (u'_j + 1)))$. Moreover, if $u'_{k_y^i} < x_{i+1}$, we replace $(u_{k_y^i},u'_{k_y^i})$ in $X_y^i$ with $(u_{k_y^i} - (u_{j+1} - (u'_j + 1)), u'_{k_y^i} - (u_{j+1} - (u'_j + 1)))$, otherwise we replace $(u_{k_y^i},u'_{k_y^i})$ in $X_y^i$ with $(u_{k_y^i} - (u_{j+1} - (u'_j + 1)), u'_{k_y^i})$. Repeating this process, we obtain that $u_{j+1} - u'_j \leq 1$ for any $1 \leq j \leq k_y^i-1$, and so $u_{k_y^i} - x_i \leq 2k_y^i - 1$. For any path $P \in \mathcal{P}$ with endpoints $(x_P^1,y_P^1)$ and $(x_P^2,y_P^2)$ such that $x_P^j = x_{i+1}$, we then replace $x_P^j$ in $s(P)$ with $x_i + 2k_i + 1$. Finally, we update $X$ to $(x_1,\ldots,x_i,x_i + 2k_i +1, x_{i+2}, \ldots,x_{\ell})$, and either for any $1 \leq i \leq \ell-1$, we have $x_{i+1} - x_i \leq 2k_i + 1$, in which case $\mathcal{G}$ uses at most $\sum_{1\leq i \leq \ell -1} (2k_i + 1) + 1 \leq 3n + 1$ columns, or there exists $1 \leq i \leq \ell -1$ such that $x_{i+1} - x_i > 2k_i + 1$, in which case we repeat the procedure above until we obtain a representation satisfying (a).  

Consider now $P \in \mathcal{P}$. Since $P$ has no bend, $\ell_P = 2$. Moreover, paths pairwise touch at most once and so $f_{P,P'} = l_{P,P'}$, for any $P' \neq P$. Suppose first $P$ is such that its first intersection point $f_P$ and last intersection point $l_P$ coincide, that is, $P$ corresponds to a $1$-vertex. If $f_{P} \notin \{(x_1,y_1), (x_{2}, y_{2})\}$, we replace $(x_1,y_1)$ with $f_P$ in $s(P)$. Suppose finally that $P$ is such that $f_P$ and $l_P$ are distinct, that is, $P$ corresponds to a $2^{+}$-vertex. If $(x_1,y_1) \neq f_P$, we replace $(x_1,y_1)$ with $f_P$ in $s(P)$. If $(x_2,y_2) \neq l_P$, we replace $(x_2,y_2)$ with $l_P$ in $s(P)$. Applying this procedure to any path $P \in \mathcal{P}$, the updated representation satisfies (b) and clearly still satisfies (a).

Consider now, in the updated representation, the set $\mathcal{P}_H = \{P \in \mathcal{P}: f_P \neq l_P \text{ and } y_1 = y_2\}$ of horizontal paths whose corresponding vertex has degree at least $2$. Clearly, the length of any path $P \in \mathcal{P}_H$ corresponding to a $2$-vertex is at least $1$. Similarly, if $P \in \mathcal{P}_H$ corresponds to a $3$-vertex and $p$ denotes the contact-point in $P \setminus \partial(P)$, we have that the paths $p_{\ell}p$ and $pp_{r}$ on the grid have both length at least $1$. Therefore, by refining the grid $3$ times, that is, setting the grid-step to $\sigma' = \sigma / 2^3$, any path $P \in \mathcal{P}_H$ corresponding to a $2$-vertex has length at least $8$ and any path $P \in \mathcal{P}_H$ corresponding to a $3$-vertex is such that $p_{\ell}p$ and $pp_{r}$ have length at least $8$. The updated representation satisfies (c) and (d), and still satisfies (a) and (b). 

Consider now a path $P$ in the updated representation such that $f_P = l_P$ , that is, $P$ corresponds to a $1$-vertex. Suppose first that $f_{P} = (x_1,y_1)$. If $P$ is a vertical path (i.e., $x_2 = x_1$), we set $s(P) = (x_1,y_1), (x_1,y_1-\sigma')$ if $y_2 < y_1$, and $s(P) = (x_1,y_1),(x_1,y_1 + \sigma')$ if $y_1 < y_2$. Otherwise, $P$ is a horizontal path (i.e., $y_1 = y_2$) and we set $s(P) = (x_1,y_1), (x_1 - \sigma',y_1)$ if $x_2 < x_1$, and $s(P) = (x_1,y_1),(x_1 + \sigma',y_1)$ if $x_1 < x_2$. We proceed similarly in the case that $f_{P} = (x_2,y_2)$. By repeating this procedure for any such path $P$, the updated representation satisfies (e). It still satisfies (a), (b), (c) and (d). Clearly, any of the above operations can be done in polynomial time.
\end{proof}

\begin{remark}\label{rmkdom} Observe that \Cref{maxdeg3} remains true if we replace $5$ in (c) and (d) with $4$.
\end{remark}

\begin{theorem}
\label{thm:boundedlengthIS}
\textsc{Independent Set} is $\mathsf{NP}$-complete when restricted to $B_0$-CPG graphs admitting a $B_0$-CPG representation where each horizontal path has length at most 2.
\end{theorem}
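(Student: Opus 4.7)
The plan is to prove NP-hardness by a polynomial-time reduction from \textsc{Independent Set} on subcubic triangle-free $B_0$-CPG graphs, which is NP-hard (this can be obtained from the NP-hardness on $B_0$-VPG graphs of \citep{KN90} by standard degree-bounding and edge-subdividing transformations, carried out at the level of the representation so as to remain $B_0$-CPG). Membership in NP is immediate since checking whether a given subset of paths is an independent set is straightforward.

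Given an instance $(G,k)$ together with a $0$-bend CPG representation $\mathcal{R}$, we first invoke \Cref{maxdeg3} to obtain a normalized representation satisfying properties (a)-(e). We then build a $B_0$-CPG graph $G'$ with representation $\mathcal{R}'$ by replacing every horizontal path $P_v$ of length greater than $2$ with a chain of short horizontal sub-paths: we split $P_v$ into pieces $P_v^1,\ldots,P_v^{k_v}$, lying along the same row, each of length at most $2$, with consecutive pieces sharing exactly one grid-endpoint. For a $3$-vertex $v$, we arrange the split so that the interior contact on $P_v$ is the shared endpoint between two consecutive pieces. The segments to be split have length at least $5$ by \Cref{maxdeg3}(c)-(d), leaving enough room for this. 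Vertical paths are left unchanged, and the resulting $\mathcal{R}'$ is a $B_0$-CPG representation of $G'$ in which every horizontal path has length at most $2$.

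The heart of the proof is the identity $\alpha(G')=\alpha(G)+c$ for a polynomially computable constant $c=\sum_v\mathrm{shift}_v$, one term per expanded vertex (with $\mathrm{shift}_v$ equal to the maximum independent set in the local chain-plus-neighbours subgraph of $G'$ minus the corresponding local value in $G$). For the forward inequality we lift an optimal IS $S$ of $G$ to $G'$: on each chain $v_1,\ldots,v_{k_v}$ we choose one of two canonical alternating patterns according to whether $v\in S$, so that the ``endpoints-in'' pattern blocks exactly the external neighbours of $v$ while the ``endpoints-out'' pattern leaves them free. For the backward inequality, a local-exchange argument shows that an optimal IS of $G'$ can be assumed to use one of these canonical patterns on every chain without any loss of size; declaring $v\in S$ precisely when the chain of $v$ uses the endpoints-in pattern then recovers an IS of $G$ of the required size $k$.

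The hard part will be the backward inequality at the interior contact of a $3$-vertex: the two sub-chains meeting there, together with the (possibly itself-expanded) middle neighbour, must be jointly rearranged into canonical form while exactly one of the three is selected at the shared grid-point, so one needs to verify that the local contributions of the two sub-chains combine additively with the choice of the middle neighbour. This step relies crucially on the length-at-least-$5$ lower bound of \Cref{maxdeg3}(c)-(d), which guarantees that each sub-chain is long enough for the alternating patterns to coexist and for the exchange argument to go through uniformly across all vertices.
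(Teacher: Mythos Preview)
Your overall strategy matches the paper's: reduce from \textsc{Independent Set} on subcubic triangle-free $B_0$-CPG graphs, normalize the representation via \Cref{maxdeg3}, chop each long horizontal path into a chain of short pieces, and relate $\alpha(G')$ to $\alpha(G)$ by analysing alternating patterns on each chain. Two points, however, deserve attention.

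First, a minor one: the paper takes the source problem directly from \citep{CGMR19}, which already gives $\mathsf{NP}$-hardness on triangle-free subcubic $B_0$-CPG graphs with the representation in hand; your detour through \citep{KN90} plus ``standard transformations carried out at the level of the representation'' is not wrong in spirit, but it is work you have not actually done.

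Second, and more substantively, your treatment of the interior contact of a $3$-vertex differs from the paper's and makes the exchange argument harder than you indicate. The paper does \emph{not} place the interior contact at a shared endpoint of two pieces: it reserves a dedicated middle piece $S_2$ of length~$2$ that \emph{strictly} contains the contact point, so that the middle neighbour $w$ is adjacent to exactly one chain vertex $v_{2q_1+5}$. Your choice makes the contact a shared endpoint of $P_v^j$ and $P_v^{j+1}$; in a CPG representation this forces $w$ to be adjacent to \emph{both} $v_j$ and $v_{j+1}$, creating a triangle $\{w,v_j,v_{j+1}\}$ in $G'$. The gadget is then no longer an even subdivision of the edges at $v$, and your ``endpoints-out'' alternating pattern no longer frees $w$: one of $v_j,v_{j+1}$ has the parity of your pattern, so $w$ stays blocked. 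To free $w$ you must skip both $v_j$ and $v_{j+1}$, which costs an extra vertex on the chain, and now the counts in the two canonical patterns and in the exchange argument depend on the parities of both sub-chains relative to the triangle. This can be repaired by choosing piece-lengths (and hence $k_v$ and $j$) with the right parities, but you have not specified any of this, and the paper's clean identity $\alpha(H)=\alpha(G)+q+q'+2(d_G(v)-1)$ relies precisely on the single-attachment structure you have discarded. In short: follow the paper and keep the interior contact in the interior of a length-$2$ piece; your variant is fixable but is not the ``canonical alternating pattern'' argument you sketch.
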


\begin{proof}
We reduce from \textsc{Independent Set} restricted to triangle-free subcubic $B_0$-CPG graphs which was shown to be $\mathsf{NP}$-hard even if a $B_0$-CPG representation is part of the input \citep{CGMR19}. Given a triangle-free subcubic $B_0$-CPG graph $G$ with $B_0$-CPG representation $\mathcal{R} = (\mathcal{G}, \mathcal{P})$, we construct a graph $G'$ as follows. First, we update $\mathcal{R}$ in polynomial time so that it satisfies (a) to (e) in \Cref{maxdeg3}. We then introduce two operations which will be applied to each $2^{+}$-vertex $v \in V(G)$ corresponding to a horizontal path $P_{v}$. They depend on whether $d_G(v) = 2$ or $d_G(v) = 3$. 

Suppose first that $d_G(v) = 2$ and denote by $l$ the length of $P_v$ (hence, $l \geq 5$). Let $q$ and $r$ be the quotient and remainder, respectively, of the division of $l-5$ by 2. The \textit{$(q,0)$-splitting} of $v$ is the operation replacing $P_v$ with $2q + 4$ horizontal paths $P_1, \ldots , P_{2q + 4}$ (from left to right) of length 1 and one horizontal path $P_{2q+5}$ (at the right extremity) of length $r+1$. 

Suppose now that $d_G(v) = 3$ and let $S_1$, $S_2$ and $S_3$ be the three grid segments obtained by dividing $P_v$ as follows. $S_2$ is the segment strictly containing the contact-point $p$ and with length $2$, $S_1$ is the remaining part of $P_v$ to the left of $S_2$ (hence with length $l_1 \geq 4$) and $S_3$ is the remaining part of $P_v$ to the right of $S_2$ (hence with length $l_3 \geq 4$) (see \Cref{fig:division}).

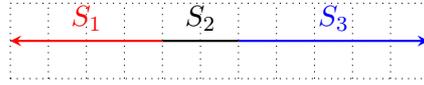
\begin{figure}[htb]
\centering
\begin{tikzpicture}[scale=.5]
\draw[step=1,black,thin,dotted] (0,-1) grid (11,1);
\draw[thick,<-,>=stealth,red] (0,0) -- (4,0) node[midway,above] {$S_1$};
\draw[thick] (4,0) -- (6,0)  node[midway,above] {$S_2$};
\draw[thick,->,>=stealth,blue] (6,0) -- (11,0) node[midway,above] {$S_3$};
\end{tikzpicture}
\caption{Dividing the path $P_v$ into three segments.}
\label{fig:division}
\end{figure} 

For $i \in \{1, 3\}$, let $q_i$ and $r_i$ be the quotient and remainder, respectively, of the division of $l_i - 4$ by 2. The \textit{$(q_1,q_3)$-splitting} of $v$ is the operation replacing: 

\begin{itemize}
\item $S_1$ with $2q_1 + 3$ horizontal paths $P_1, \ldots, P_{2q_1 + 3}$ (from left to right) of length 1 and one horizontal path $P_{2q_1 + 4}$ (at the right extremity) of length $r_1 + 1$;
\item $S_2$ with a horizontal path $P_{2q_1 + 5}$ of length 2; 
\item $S_3$ with $2q_3 + 3$ horizontal paths $P_{2q_1 + 6}, \ldots, P_{2(q_1 + q_3) + 8}$ (from left to right) of length 1 and one horizontal path $P_{2(q_1 + q_3) + 9}$ (at the right extremity) of length $r_3 + 1$. 
\end{itemize}

Notice that the $(q,q')$-splitting of a $2^{+}$-vertex $v$ removes $v$ and replaces it with $2q + 2q' + 4d_G(v) - 3$ new vertices\footnote{We remark that this operation can be obtained by several applications of the \textit{vertex stretching} introduced in \cite{ABKL07} and is in fact equivalent to edge subdivisions.} (if $d_G(v) = 2$, then $q'=0$). The graph $G'$ is then obtained from $G$ by $(q,q')$-splitting every $2^{+}$-vertex $v$ whose corresponding path is horizontal. It is easy to see that this operation can be performed in polynomial time, that $G'$ has $O(n^{2})$ vertices (by \Cref{maxdeg3}) and that it admits a $B_0$-CPG representation where each horizontal path has length at most $2$. To complete the proof, it is then enough to show the following:

\begin{claim}\label{clm:split} Let $H$ be the graph obtained by $(q,q')$-splitting a $2^{+}$-vertex $v\in V(G)$ whose corresponding path in $\mathcal{R}$ is horizontal. We have that $\alpha (H) = \alpha (G) + q + q' + 2(d_G(v) - 1)$.
\end{claim}

Denote by $U = \{v_i : 1 \leq i \leq 2q + 2q' + 4d_G(v) - 3\}$ the set of vertices introduced by the $(q,q')$-splitting of $v$, where $v_iv_{i+1} \in E(H)$ for any $1 \leq i \leq 2q + 2q' + 4d_G(v) - 4$. 

Given a maximum independent set $S$ of $G$, we construct an independent set $S'$ of $H$ as follows. If $v \in S$, then 

$$S' = (S \setminus \{v\}) \cup \{v_{2k+1} : 0 \leq k \leq q + q' + 2(d_G(v) - 1)\},$$ otherwise, $$S' = S \cup \{v_{2k} : 1 \leq k \leq q + q' + 2(d_G(v) - 1)\}.$$ In both cases, $S'$ is easily seen to be independent and so $\alpha (H) \geq \vert S' \vert = \vert S \vert + q + q' + 2(d_G(v) - 1)$.

Conversely, let $S'$ be a maximum independent set of $H$. We construct an independent set $S$ of $G$ as follows. First we add to $S$ every vertex in $S' \setminus U$. Then we decide whether to add $v$ or not according to the following cases. 

Suppose first that $d_G(v) = 2$. If $v_1$ and $v_{2q+5}$ both belong to $S'$, then we add $v$ to $S$ (note that, by maximality,  $S'$ contains $q + 3$ vertices of $U$). Otherwise, one of $v_1$ and $v_{2q+5}$ does not belong to $S'$, in which case we do not add $v$ to $S$ (again, by maximality, $S'$ contains $q + 2$ vertices of $U$). In both cases the constructed $S$ is easily seen to be independent and $\vert S \vert = \vert S' \vert - (q + 2)$. 

Suppose now that $d_G(v) = 3$. If $v_1$, $v_{2q + 5}$ and $v_{2(q+q') + 9}$ belong to $S'$, then we add $v$ to $S$ (note that, by maximality, $S'$ contains $q + q' + 5$ vertices of $U$). Otherwise, one of $v_1$, $v_{2q + 5}$ and $v_{2(q+q') + 9}$ does not belong to $S'$ and we do not add $v$ to $S$ (note that in this case $S'$ contains at most $q + q' + 4$ vertices of $U$). In both cases the constructed $S$ is easily seen to be independent and $\vert S \vert \geq \vert S' \vert - (q + q' + 4)$. 

Therefore, $\alpha (G) \geq \vert S \vert \geq \vert S' \vert - (q + q' + 2(d_G(v) - 1))$, concluding the proof of \Cref{clm:split}. 
\end{proof}

In the following $\mathsf{NP}$-hardness proof, we reduce from \textsc{Dominating Set} restricted to subcubic planar bipartite graphs. This problem is easily seen to be $\mathsf{NP}$-hard by recalling that a $3$-subdivision of an edge of a graph increases the domination number by exactly one \citep{Kor92} and that \textsc{Dominating Set} restricted to subcubic planar graphs is $\mathsf{NP}$-hard \citep{GJ79}.

\begin{theorem}
\label{thm:boundedlengthDS}
\textsc{Dominating Set} is $\mathsf{NP}$-complete when restricted to $B_0$-CPG graphs admitting a $B_0$-CPG representation where each horizontal path has length at most 2.
\end{theorem}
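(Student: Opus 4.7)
The plan is to mirror the structure of \Cref{thm:boundedlengthIS}: reduce from \textsc{Dominating Set} on a suitable $\mathsf{NP}$-hard subclass of $B_0$-CPG graphs, and apply a splitting operation replacing each horizontal path of length greater than $2$ by a chain of unit-length horizontal paths, while controlling the change in $\gamma$. As the source of hardness I would first observe that \textsc{Dominating Set} is $\mathsf{NP}$-hard on triangle-free subcubic $B_0$-CPG graphs: start from \textsc{Dominating Set} on subcubic planar graphs (\citep{GJ79}), iteratively $3$-subdivide every edge (each such operation increases $\gamma$ by exactly one by \citep{Kor92}), and observe that after enough rounds the graph has arbitrarily large girth and hence is known to admit a $0$-bend CPG representation, constructible in polynomial time (via \citep{CGMR19}).

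Given such an input graph $G$ with $0$-bend CPG representation $\mathcal{R} = (\mathcal{G}, \mathcal{P})$, I would apply \Cref{maxdeg3} together with \Cref{rmkdom} to normalize $\mathcal{R}$ so that the analogues of (a)--(e) hold with $4$ in place of $5$ in (c) and (d). Then, for each horizontal path $P_v$ whose corresponding vertex $v$ has degree at least $2$, I would define a splitting analogous to the $(q,q')$-splitting of \Cref{thm:boundedlengthIS}: for $d_G(v) = 2$, replace $P_v$ by a chain of unit-length paths between the two contact-points so that the resulting chain realizes a $3k$-subdivision of the ``edge through $v$''; for $d_G(v) = 3$, do the same on each of the two portions of $P_v$ flanking the interior contact-point, so that each outer chain realizes an independent $3k_i$-subdivision while a middle piece of length $2$ takes over the intersection with $v$'s vertical neighbor. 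By \citep{Kor92}, each $3k$-subdivision contributes exactly $k$ to the increase in $\gamma$, so after applying all splittings we obtain a $B_0$-CPG graph $G'$ in which every horizontal path has length at most $2$ and $\gamma(G') = \gamma(G) + N$ for an explicitly polynomial-time computable $N$. The analogue of \Cref{clm:split} then completes the reduction.

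The main obstacle will be the mod-$3$ bookkeeping needed to force every split to realize exactly a $3k$-subdivision of each affected edge. Whenever the length of a path $P_v$ coming out of \Cref{maxdeg3} does not have the right congruence, I would refine the grid one further time to create enough slack in every segment and shift each relevant length into the correct residue class, which only costs a bounded factor in the size of the final representation. The remainder of the argument, i.e.\ establishing the identity $\gamma(G') = \gamma(G) + N$, is a case analysis in the spirit of \Cref{clm:split}: one translates between optimal dominating sets of $G$ and $G'$ by fixing the membership of the ``chain endpoints'' in the dominating set and filling in the interior of each chain with the standard periodic dominating pattern on a path, the arithmetic of which yields exactly the predicted offset $N$.
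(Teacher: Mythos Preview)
Your overall strategy matches the paper's: both reduce from \textsc{Dominating Set} on subcubic planar graphs via $3$-subdivision, normalize the representation through \Cref{maxdeg3} and \Cref{rmkdom}, and split each long horizontal path into a chain of short paths whose graph-theoretic effect is a controlled sequence of subdivisions. Two points need correction, though.

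First, your appeal to \citep{CGMR19} is misplaced: that reference does not show that large-girth planar subcubic graphs are $B_0$-CPG. What actually happens after one round of $3$-subdividing every edge is that the resulting graph is \emph{bipartite} (every cycle length is multiplied by $4$), and the paper invokes \citep{CKU98}, which produces in linear time a $B_0$-CPG representation of any planar bipartite graph. So the correct starting class is subcubic planar bipartite graphs, exactly as in the paper; ``large girth'' plays no role.

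Second, your proposed fix for the mod-$3$ bookkeeping --- refine the grid once more and ``shift each relevant length into the correct residue class'' --- does not work as stated: a global refinement doubles every length, which merely permutes residues modulo $3$ and cannot force a prescribed one, and after \Cref{maxdeg3} the endpoints of each horizontal path are contact points that cannot be moved independently. The paper sidesteps this entirely by absorbing the remainder $r\in\{0,1,2\}$ of $(l-4)/3$ (respectively $(l_i-3)/3$) into the \emph{last two} pieces of each chain, which receive lengths $1+\lfloor r/2\rfloor$ and $1+\lceil r/2\rceil$ and so remain at most $2$. This yields exactly $3(q+q'+d_G(v))-2$ new vertices, and a direct case analysis on $|D'\cap U|$ (rather than an appeal to \citep{Kor92}) establishes $\gamma(H)=\gamma(G)+q+q'+d_G(v)-1$.
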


\begin{proof} We reduce from \textsc{Dominating Set} restricted to subcubic planar bipartite graphs which is $\mathsf{NP}$-hard by the paragraph above. Given a subcubic planar bipartite graph $G$, we construct, similarly to \Cref{thm:boundedlengthIS}, a graph $G'$ as follows. First, since $G$ is planar and bipartite, we obtain in linear time a $B_0$-CPG representation $\mathcal{R} = (\mathcal{G}, \mathcal{P})$ of $G$ \citep{CKU98}. Since $G$ is triangle-free and subcubic, we then update $\mathcal{R}$ in polynomial time so that it satisfies (a) to (e) in \Cref{maxdeg3} where the $5$'s in the statements are replaced by $4$'s (see \Cref{rmkdom}). We now introduce two operations which will be applied to each $2^{+}$-vertex $v \in V(G)$ corresponding to a horizontal path $P_{v}$. They depend on whether $d_G(v) = 2$ or $d_G(v) = 3$. 

Suppose first that $d_G(v) = 2$ and denote by $l$ the length of $P_v$ (hence, $l \geq 4$). Let $q$ and $r$ be the quotient and remainder, respectively, of the division of $l-4$ by 3. The \textit{$(q,0)$-splitting} of $v$ is the operation replacing $P_v$ with $3q + 2$ horizontal paths $P_1, \ldots , P_{3q + 2}$ (from left to right) of length 1 and two horizontal paths $P_{3q+3}$ (touching $P_{3q+2}$) and $P_{3q+4}$ (at the right extremity) of lengths $1 + \lfloor r/2 \rfloor$ and $1 + \lceil r/2 \rceil$, respectively. 

Suppose now that $d_G(v) = 3$ and let $S_1$, $S_2$ and $S_3$ be the three grid segments obtained by dividing $P_v$ as follows. $S_2$ is the segment strictly containing the contact-point $p$ and with length $2$, $S_1$ is the remaining part of $P_v$ to the left of $S_2$ (hence with length $l_1 \geq 3$) and $S_3$ is the remaining part of $P_v$ to the right of $S_2$ (hence with length $l_3 \geq 3$). For $i \in \{1, 3\}$, let $q_i$ and $r_i$ be the quotient and remainder, respectively, of the division of $l_i - 3$ by 3. The \textit{$(q_1,q_3)$-splitting} of $v$ is the operation replacing: 

\begin{itemize}
\item $S_1$ with $3q_1 + 1$ horizontal paths $P_1, \ldots, P_{3q_1 + 1}$ (from left to right) of length 1 and two horizontal paths $P_{3q_1+2}$ (touching $P_{3q_1+1}$) and $P_{3q_1 + 3}$ (at the right extremity) of lengths $1 + \lfloor r_1/2 \rfloor$ and $1 + \lceil r_1/2 \rceil$, respectively; 
\item $S_2$ with a horizontal path $P_{3q_1 + 4}$ of length 2; 
\item $S_3$ with $3q_3 + 1$ horizontal paths $P_{3q_1 + 5}, \ldots, P_{3(q_1 + q_3) + 5}$ (from left to right) of length 1 and two horizontal paths $P_{3(q_1+q_3)+6}$ (touching $P_{3(q_1+q_3)+5}$) and $P_{3(q_1 + q_3) + 7}$ (at the right extremity) of lengths $1 + \lfloor r_3/2 \rfloor$ and $1 + \lceil r_3/2 \rceil$, respectively. 
\end{itemize}

Notice that the $(q,q')$-splitting of a $2^{+}$-vertex $v$ removes $v$ and replaces it with $3(q + q' + d_G(v)) - 2$ new vertices (if $d_G(v) = 2$, then $q'=0$). The graph $G'$ is then obtained from $G$ by $(q,q')$-splitting every $2^{+}$-vertex $v$ whose corresponding path is horizontal. It is easy to see that this operation can be performed in polynomial time, that $G'$ has $O(n^{2})$ vertices (by \Cref{maxdeg3}) and that $G'$ admits a $B_0$-CPG representation where each horizontal path has length at most 2. To complete the proof, it is then enough to show the following:

\begin{claim}\label{claimdomm} Let $H$ be the graph obtained by $(q,q')$-splitting a $2^{+}$-vertex $v\in V(G)$ whose corresponding path in $\mathcal{R}$ is horizontal. We have that $\gamma (H) = \gamma (G) + q + q' + d_G(v) - 1$.
\end{claim}

Denote by $U = \{v_i : 1 \leq i \leq 3(q + q' + d_G(v)) - 2\}$ the set of vertices introduced by the $(q,q')$-splitting of $v$, where $v_iv_{i+1} \in E(H)$ for any $1 \leq i \leq 3(q + q' + d_G(v)) - 3$. 

Given a minimum dominating set $D$ of $G$, we construct a dominating set $D'$ of $H$ as follows. If $v \in D$, then $$D' = (D \setminus \{v\}) \cup \{v_{3k + 1} : 0 \leq k \leq q + q' + d_G(v) - 1\}.$$ Otherwise, there exists $u \in N_G(v)$ which belongs to $D$ and we distinguish cases depending on which vertex $u$ is adjacent to in $H$. If $u$ is adjacent to $v_1$, then $$D' = D \cup \{v_{3k} : 1 \leq k \leq q + q' + d_G(v) - 1\}.$$ If $u$ is adjacent to $v_{3(q + q' + d_G(v)) - 2}$, then $$D' = D \cup \{v_{3k + 2} : 0 \leq k \leq q + q' + d_G(v) - 2\}.$$ Otherwise, $u$ is adjacent to $v_{3q + 4}$ and $$D' = D \cup \{v_{3k+2} : 0 \leq k \leq q\} \cup \{v_{3(k+q+1)} : 1 \leq k \leq q' + 1\}.$$ In all cases, $D'$ is easily seen to be dominating and $\gamma (H) \leq \vert D' \vert = \vert D \vert + q + q' + d_G(v) -1$.

Conversely, let $D'$ be a minimum dominating set of $H$. We construct a dominating set $D$ of $G$ as follows. First we put in $D$ every vertex in $D' \setminus U$. Then we decide whether to add $v$ or not according to the following cases. 

Suppose first that $d_G(v) = 2$. Since the vertices $v_{3k}$ with $1 \leq k \leq q+1$ have pairwise disjoint neighborhoods, we have that $|D' \cap U| \geq q+1$. We now claim that if $|D' \cap U| = q+1$, then none of $v_1$ and $v_{3q+4}$ belongs to $D'$ and one of them is dominated by some vertex in $D' \setminus U$. Indeed, $$v_1 \notin \bigcup_{1 \leq k \leq q+1}N[v_{3k}], \ \ \ v_{3q+4} \notin \bigcup_{0 \leq k \leq q}N[v_{3k+2}]$$ and the unions are over pairwise disjoint neighborhoods. Therefore, if $|D' \cap U| = q+1$, then $\{v_1, v_{3q+4}\} \cap D' = \varnothing$. Suppose now that none of $v_1$ and $v_{3q+4}$ is dominated by some vertex in $D' \setminus U$. This implies that $v_2$ and $v_{3q+3}$ both belong to $D'$. But none of $v_2$ and $v_{3q+3}$ belongs to $\bigcup_{1 \leq k \leq q}N[v_{3k+1}]$ and these neighborhoods are pairwise disjoint, contradicting the fact that $|D' \cap U| = q+1$. Therefore, in the case $d_G(v) = 2$, we add $v$ to $D$ if and only if $|D' \cap U| > q + 1$. 

Suppose now that $d_G(v) = 3$. Since the vertices $v_{3k}$ with $1 \leq k \leq q+q'+2$ have pairwise disjoint neighborhoods, we have that $|D' \cap U| \geq q+q'+2$. Similarly to the previous paragraph, we claim that if $|D' \cap U| = q+q'+2$, then none of $v_1$, $v_{3q+4}$ and $v_{3(q+q')+7}$ belongs to $D'$ and one of them is dominated by some vertex in $D' \setminus U$. Indeed, $$v_1 \notin \bigcup_{1 \leq k \leq q+q'+2}N[v_{3k}], \ \ \ v_{3q+4} \notin \bigcup_{0 \leq k \leq q}N[v_{3k+2}] \bigcup_{q + 2 \leq k \leq q + q' + 2}N[v_{3k}], \ \ \  v_{3(q+q')+7}\notin \bigcup_{0 \leq k \leq q+q'+1}N[v_{3k+2}]$$ and the unions are over pairwise disjoint neighborhoods. Therefore, if $|D' \cap U| = q+q'+2$, then $\{v_1, v_{3q+4}, v_{3(q+q')+7}\} \cap D' = \varnothing$. 

Suppose now that none of $v_1$, $v_{3q+4}$ and $v_{3(q+q')+7}$ is dominated by some vertex in $D' \setminus U$. This implies that $v_2$, $v_{3(q+q')+6}$ and one of $\{v_{3q+3}, v_{3q+5}\}$ all belong to $D'$. We assume that $v_{3q+3} \in D'$ (the case $v_{3q+5} \in D'$ is similar and left to the reader). On the other hand, $$\{v_2, v_{3q+3}, v_{3(q+q')+6}\} \cap \bigg(\bigcup_{1 \leq k \leq q}N[v_{3k+1}] \bigcup_{q+1 \leq k \leq q+q'}N[v_{3k+2}]\bigg) = \varnothing$$ and these neighborhoods are pairwise disjoint, contradicting the fact that $|D' \cap U| = q+q'+2$. Therefore, in the case $d_G(v) = 3$, we add $v$ to $D$ if and only if $|D' \cap U| > q + q' + 2$. 

In both cases, $D$ is easily seen to be dominating and so $\gamma (G) \leq \vert D \vert \leq \vert D' \vert - (q + q' + d_G(v) - 1)$, thus concluding the proof of \Cref{claimdomm}.
\end{proof}


\section{Boundedness of width parameters}\label{mimsec}

In this section, we prove the main result allowing us to obtain PTASes for \textsc{Independent Set} and \textsc{Dominating Set}: If a VPG graph admits a VPG representation with a bounded number of columns and such that each grid-edge belongs to a bounded number of paths, then the graph has bounded mim-width.

The maximum induced matching width (mim-width for short) is a graph parameter introduced by \citet{Vat12}
measuring how easy it is to decompose a graph along vertex cuts inducing a bipartite graph with small maximum induced matching size. Replacing induced matchings with matchings, one obtains the related parameter called maximum matching width (mm-width for short) \citep{Vat12}. The modelling power of mim-width is stronger than that of tree-width, in the sense that graphs of bounded tree-width have bounded mim-width but there exist graph classes (interval graphs and permutation graphs) with mim-width $1$ \citep{Vat12} and unbounded tree-width \citep{GR00}. On the other hand, mm-width and tree-width are equivalent parameters, in the sense that one is bounded if and only if the other is \citep{Vat12}.

It is well-known that boundedness of tree-width allows polynomial-time solvability of several otherwise $\mathsf{NP}$-hard graph problems. Boundedness of mim-width has important algorithmic consequences as well, in particular for the so-called $(\sigma, \rho)$-domination problems, a subclass of graph problems expressible in $\mbox{MSO}_{1}$ introduced by \citet{TP97} and including \textsc{Independent Set} and \textsc{Dominating Set}. Combining results in \citep{BV13,BTV13}, it is known that the three versions of a $(\sigma, \rho)$-domination problem (minimization, maximization, existence) can be solved in $O(n^{w})$ time, assuming a branch decomposition of mim-width $w$ is provided as part of the input. In the case of \textsc{Independent Set} and \textsc{Dominating Set}, these results read as follows: 

\begin{theorem}[see \citep{JKST18}]\label{algomim} There is an algorithm that, given a graph $G$ and a branch decomposition $(T, \delta)$ of $G$ with $w = \mathrm{mimw}_{G}(T, \delta)$, solves \textsc{Independent Set} and \textsc{Dominating Set} in $O(n^{4 + 3w})$ time.
\end{theorem}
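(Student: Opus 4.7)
My proof proposal is to set up a standard bottom-up dynamic programming procedure on the branch decomposition $(T,\delta)$, leveraging the $d$-neighbor equivalence framework of Bui-Xuan--Telle--Vatshelle that underlies essentially all algorithms parameterized by mim-width. Each edge $e$ of $T$ induces a bipartition $(A_e,\overline{A_e})$ of $V(G)$, and one declares $X\equiv^{d}_{A_e} Y$ for $X,Y\subseteq A_e$ when $\min(d,|N(v)\cap X|)=\min(d,|N(v)\cap Y|)$ for every $v\in\overline{A_e}$. For both \textsc{Independent Set} and \textsc{Dominating Set}, the validity of a candidate solution $S$ at a vertex $v$ depends only on whether $v$ has zero or at least one neighbor in $S$; thus it suffices to work with $d=1$, and two subsets are then equivalent precisely when they have the same neighborhood in $\overline{A_e}$.

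The first step is a precomputation, for every edge $e$ of $T$, of a system of representatives $\mathcal{R}_e$ of the $\equiv^{1}_{A_e}$-classes, together with an oracle that identifies the class of any union of two representatives. The key structural fact, due to Belmonte--Vatshelle, is that $|\mathcal{R}_e|\leq n^{\mathrm{mim}(A_e,\overline{A_e})}\leq n^{w}$, and the representatives plus oracle can be produced in time polynomial in $n^{w}$. With this in hand, one maintains at each edge $e$ a DP table indexed by $\mathcal{R}_e$, storing the extremum (maximum for \textsc{Independent Set}, minimum for \textsc{Dominating Set}) of $|S|$ over $S\subseteq A_e$ with $S\equiv^{1}_{A_e} R$ that additionally satisfies the internal $(\sigma,\rho)$-constraints already forced inside $A_e$ (for \textsc{Dominating Set}, augmented by a bit per vertex indicating whether domination is postponed to the opposite side). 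Leaves produce trivial tables of size $2$; at an internal node merging edges $e_1,e_2$ below into $e$ above, one iterates over triples $(R_1,R_2,R)\in\mathcal{R}_{e_1}\times\mathcal{R}_{e_2}\times\mathcal{R}_e$, uses the oracle to test whether the union of representatives equivalent to $R_1$ and $R_2$ lies in the class of $R$, and updates $T_e[R]$ accordingly. The final answer $\alpha(G)$ or $\gamma(G)$ is read off at the root edge, whose cut is trivial on one side.

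The main obstacle, and the source of the exponent $4+3w$, is the time analysis. There are $O(n)$ internal nodes, each processing $O(n^{3w})$ triples of representatives; each triple requires an oracle call plus a predicate check, costing $O(n)$ using the precomputed structure; this yields $O(n^{4+3w})$ overall, which also dominates the cost of building the $\mathcal{R}_e$ and their oracles across all $O(n)$ cuts. The delicate technical points are to verify (i) that $\equiv^{1}$ preserves exactly enough information to evaluate the $(\sigma,\rho)$-predicates of both problems, and (ii) that the representative-identification oracle genuinely runs in $O(n)$ time; both are supplied by the induced-matching-based construction of Belmonte--Vatshelle and are black-boxed from \citep{JKST18}. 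The remaining work amounts to checking that the update formulas at merge nodes correctly combine the two table entries for each problem, which follows immediately from the fact that neither adjacency across the cut nor $(\sigma,\rho)$-membership at a cut vertex depends on anything beyond the class of $S\cap A_e$ and the class of $S\cap \overline{A_e}$.
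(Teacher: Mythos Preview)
The paper does not prove \Cref{algomim}; it is stated with the attribution ``see \citep{JKST18}'' and used as a black box throughout \Cref{mimsec,ptases}. There is therefore no proof in the paper to compare against, and your sketch goes well beyond what the paper itself supplies.

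Your high-level strategy is the correct one: the $d$-neighbor-equivalence dynamic programming of Bui-Xuan--Telle--Vatshelle together with the Belmonte--Vatshelle bound $\mathrm{nec}_1(A_e)\le n^{\mathrm{mim}(A_e,\overline{A_e})}\le n^{w}$, and the arithmetic ``$O(n)$ merge nodes $\times$ $O(n^{3w})$ triples $\times$ $O(n)$ per triple'' for the exponent $4+3w$ is the right shape. There is, however, a genuine slip in your treatment of \textsc{Dominating Set}: augmenting the table by ``a bit per vertex indicating whether domination is postponed to the opposite side'' yields $2^{|A_e|}$ states, not $n^{O(w)}$, and would destroy the polynomial running time. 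In the actual framework the table at the cut $(A_e,\overline{A_e})$ is indexed by a \emph{pair} $(R,R')\in\mathcal{R}_{A_e}\times\mathcal{R}_{\overline{A_e}}$, where $R'$ is a representative of the $\equiv^{1}_{\overline{A_e}}$-class of the yet-unseen part of the solution; this is precisely what records, in a compressed $n^{w}$-sized way, which vertices of $A_e$ may still await domination from outside. With that correction the table has $n^{2w}$ entries per cut, the merge ranges over triples in $\mathcal{R}_{A_1}\times\mathcal{R}_{A_2}\times\mathcal{R}_{\overline{A}}$ (the remaining indices being determined functionally by the oracle), and the $O(n^{4+3w})$ bound follows as you outlined. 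For \textsc{Independent Set} your single-index table is fine, but note that iterating over triples $(R_1,R_2,R)$ and testing membership is wasteful: for each $(R_1,R_2)$ the class $R$ is determined by the oracle, so the merge there is really only $O(n^{2w})$ and \textsc{Dominating Set} is the bottleneck.
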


It should be remarked that deciding the mim-width of a graph is $\mathsf{NP}$-hard in general and not in $\mathsf{APX}$ unless $\mathsf{NP} = \mathsf{ZPP}$ \citep{SV16}. However, \citet{BV13} showed that it is possible to find branch decompositions of constant mim-width in polynomial time for several classes of graphs such as permutation graphs, convex graphs, interval graphs, circular arc graphs, etc.

\citet{JKST18} enlarged the class of problems polynomially solvable on graphs of bounded mim-width by showing that the distance-$r$ version of a $(\sigma, \rho)$-domination problem can be polynomially reduced to the $(\sigma, \rho)$-domination problem. This essentially follows from the fact that, for each positive integer $r$, the mim-width of the graph power $G^{r}$ is at most twice that of $G$. The effect on mim-width of some other graph operations has been studied in \citep{BHM20,GMR20,Men17}.

Before turning to the proof of our result, let us properly define the notions of mim-width and mm-width. A \textit{branch decomposition}\footnote{A branch decomposition is also known as a decomposition tree.} for a graph $G$ is a pair $(T, \delta)$, where $T$ is a subcubic tree and $\delta$ is a bijection between the vertices of $G$ and the leaves of $T$. Each edge $e \in E(T)$ naturally splits the leaves of the tree in two groups depending on their component when $e$ is removed. In this way, each edge $e \in E(T)$ represents a partition of $V(G)$ into two partition classes $A_{e}$ and $\overline{A_{e}}$, denoted $(A_{e}, \overline{A_{e}})$. Denoting by $G[X, Y]$ the bipartite subgraph of $G$ induced by the edges with one endpoint in $X$ and the other in $Y$, mim-width and mm-width are defined as follows:

\begin{definition} Let $G$ be a graph and let $(T, \delta)$ be a branch decomposition for $G$. For each edge $e \in E(T)$ and the corresponding partition $(A_{e}, \overline{A_{e}})$ of $V(G)$, we denote by $\mathrm{cutmim}_{G}(A_{e}, \overline{A_{e}})$ and $\mathrm{cutmm}_{G}(A_{e}, \overline{A_{e}})$ the size of a maximum induced matching and maximum matching in $G[A_{e}, \overline{A_{e}}]$, respectively. The mim-width of the branch decomposition $(T, \delta)$ is the quantity $\mathrm{mimw}_{G}(T, \delta) = \max_{e \in E(T)}\mathrm{cutmim}_{G}(A_{e}, \overline{A_{e}})$. The mim-width $\mathrm{mimw}(G)$ of the graph $G$ is the minimum value of $\mathrm{mimw}_{G}(T, \delta)$ over all possible decompositions trees $(T, \delta)$ for $G$. The mm-width of the branch decomposition $(T, \delta)$ is the quantity $\mathrm{mmw}_{G}(T, \delta) = \max_{e \in E(T)}\mathrm{cutmm}_{G}(A_{e}, \overline{A_{e}})$ and the mm-width $\mathrm{mmw}(G)$ of the graph $G$ is the minimum value of $\mathrm{mmw}_{G}(T, \delta)$ over all possible decompositions trees $(T, \delta)$ for $G$. 
\end{definition}

Clearly, for any branch decomposition $(T, \delta)$ for $G$, $\mathrm{mimw}_G(T, \delta) \leq \mathrm{mmw}_G(T, \delta)$ and so $\mathrm{mimw}(G) \leq \mathrm{mmw}(G)$. 

We assume throughout the rest of the paper that our graphs have no isolated vertices. This can be safely assumed as input graphs can be easily preprocessed in order to remove isolated vertices when we consider \textsc{Independent Set} and \textsc{Dominating Set}. Notice however that our following key result still holds if the graph has isolated vertices. It shows that the graph class of interest not only has bounded mim-width but also bounded mm-width.

\begin{theorem}\label{lem:boundedmimvpg}
Let $G$ be a VPG graph with a representation $\mathcal{R} = (\mathcal{G}, \mathcal{P})$ such that each grid-edge in $\mathcal{G}$ belongs to at most $t$ paths in $\mathcal{P}$ and $\mathcal{G}$ contains at most $\ell$ columns, for some integers $t,\ell \geq 0$. Then $\mathrm{mimw}(G) \leq \mathrm{mmw}(G) \leq 3t \cdot (\ell + 1)$. 

Moreover, if we are given a VPG graph $G$ on $n$ vertices together with a representation $\mathcal{R} = (\mathcal{G}, \mathcal{P})$ as above and such that in addition each path in $\mathcal{P}$ has a number of bends polynomial in $n$, then it is possible to compute in $O(n\log n)$ time a branch decomposition $(T, \delta)$ for $G$ such that $\mathrm{mimw}_G(T, \delta) \leq\mathrm{mmw}_G(T, \delta) \leq 3t \cdot (\ell + 1)$.   
\end{theorem}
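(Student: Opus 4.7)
The plan is to construct a caterpillar branch decomposition $(T,\delta)$ whose leaves correspond to the paths of $\mathcal{P}$ listed in a geometric order extracted from $\mathcal{R}$, and then bound the size of a maximum matching in the bipartite subgraph induced by any edge-cut of $T$. The intuition is a sweep-line argument along the $y$-axis: whenever we cut the caterpillar at a leaf, the matched pairs on opposite sides are forced to ``straddle'' a horizontal barrier, and the column bound $\ell$ together with the multiplicity bound $t$ limits how many such straddlers there can be.

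Concretely, let $\mathrm{top}(P) = \max\{y : (x,y) \in P\}$ be the topmost row touched by $P$, which can be read off from the sequence $s(P)$. Sort the paths as $P_1, \ldots, P_n$ so that $\mathrm{top}(P_1) \leq \ldots \leq \mathrm{top}(P_n)$, breaking ties by the bottommost $y$-coordinate. Let $T$ be the caterpillar whose $j$-th leaf is $\delta(P_j)$. A generic internal edge $e$ partitions $V(G)$ into $A_e = \{P_1, \ldots, P_j\}$ and $\overline{A_e} = \{P_{j+1}, \ldots, P_n\}$, and we must bound the maximum matching in $G[A_e, \overline{A_e}]$.

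Set $y_\star = \mathrm{top}(P_j)$. For any matching edge $\{P,P'\}$ with $P \in A_e$, $P' \in \overline{A_e}$ and $\mathrm{top}(P') > y_\star$, the intersection point $P \cap P'$ has $y$-coordinate at most $\mathrm{top}(P) \leq y_\star$, while $P'$ also reaches a row strictly above $y_\star$; hence $P'$ must traverse at least one vertical grid-edge joining rows $y_\star$ and $y_\star + 1$. Since the grid has at most $\ell$ such grid-edges and each belongs to at most $t$ paths, at most $t\ell$ distinct paths $P'$ can occur, so these matching edges contribute at most $t\ell$. The remaining matching edges satisfy $\mathrm{top}(P') = y_\star$, in which case both endpoints are confined to rows $\leq y_\star$; I would handle them by invoking the secondary tie-break key and a symmetric sweep from below (contributing another $t\ell$), combined with a direct one-dimensional count for horizontal-only paths lying on the extremal row $y_\star$ (a strip of at most $\ell$ consecutive cells where each cell carries at most $t$ paths). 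Summing the three contributions gives $\mathrm{cutmm}_G(A_e, \overline{A_e}) \leq 3t(\ell+1)$, and hence $\mathrm{mimw}(G) \leq \mathrm{mmw}(G) \leq 3t(\ell+1)$.

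The main obstacle will be the tied case $\mathrm{top}(P') = y_\star$, where the primary barrier-crossing argument degenerates: degenerate single-point paths, horizontal paths meeting only at shared grid-points, and the endpoints of horizontal segments all have to be accounted for, and are the source of the constant $3$ and the ``$+1$'' in the bound. For the algorithmic claim, each $\mathrm{top}(P)$ and the auxiliary key can be computed from $s(P)$ in time proportional to the number of bends of $P$, the bend counts are polynomially bounded, and so sorting the $n$ paths and emitting the caterpillar $(T,\delta)$ takes $O(n \log n)$ time in total.
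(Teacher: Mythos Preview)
Your overall strategy---a caterpillar ordered by a geometric key, plus a barrier-crossing count---is sound and close in spirit to the paper, but the execution differs. The paper does \emph{not} sort by $\mathrm{top}/\mathrm{bot}$. Instead it first trims each path so that its endpoints are intersection points, then orders the paths by the position of their smallest endpoint in the top-to-bottom, left-to-right reading order of grid-points (with a local clockwise rule among paths sharing that endpoint). The barrier for a cut is then not a horizontal line but an L-shaped polyline through the dividing endpoint $p$: it is shown that for every matched pair $(P,\overline{P})$ one of the two paths must pass through one of the at most $\ell+1$ grid-points on this polyline, and at each such point only three of the four incident grid-edges can be used by the pair, yielding the factor $3t$ per point. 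Your horizontal-line barrier is conceptually simpler and in fact gives a slightly sharper constant; the paper's polyline barrier trades that for a single uniform counting argument with no tied case at all.

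The genuine gap in your sketch is the tied case. Your ``symmetric sweep from below via the secondary key $\mathrm{bot}$'' only constrains $\mathrm{bot}(P)$ versus $\mathrm{bot}(P')$ when \emph{both} $P$ and $P'$ have $\mathrm{top}=y_\star$; it says nothing when $\mathrm{top}(P)<y_\star$ and $\mathrm{top}(P')=y_\star$, and such matching edges are neither covered by that sweep nor by your ``horizontal-only on row $y_\star$'' count. Fortunately the secondary key is unnecessary: whenever $\mathrm{top}(P')=y_\star$, either $\mathrm{bot}(P')<y_\star$, in which case $P'$ itself must use one of the at most $\ell$ vertical grid-edges joining row $y_\star$ to the row just below (another $t\ell$ contribution, independent of $P$), or $\mathrm{bot}(P')=y_\star$, in which case $P'$ lies entirely on row $y_\star$ and uses one of the at most $\ell-1$ horizontal grid-edges on that row (at most $t(\ell-1)$). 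Summing with your first case gives $t\ell+t\ell+t(\ell-1)=t(3\ell-1)\le 3t(\ell+1)$, and any tie-breaking rule whatsoever suffices to define the caterpillar.
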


\begin{proof} It is enough to show only the second assertion as it will become clear from the proof that the constraint on the number of bends is used only to efficiently compute $(T, \delta)$ (recall that in this case our data structure has polynomial size). 

We begin by modifying $\mathcal{R}$ in $O(n)$ time so that, for each path $P \in \mathcal{P}$, both of its endpoints belong to at least one other path, unless $P$ contains only one intersection point, in which case at least one of its endpoints belongs to another path. This is done as follows. If there exists $P \in \mathcal{P}$ such that $(x_1,y_1)$ belongs to no other path, we replace $(x_1,y_1)$ with the first intersection point $f_P$ of $P$ in $s(P)$. Similarly, if $(x_{\ell_P},y_{\ell_P})$ belongs to no other path, we replace $(x_{\ell_P},y_{\ell_P})$ with the last intersection point $l_P$ of $P$ in $s(P)$, unless $f_P = l_P$. 

We now construct a branch decomposition $(T, \delta)$ for $G$ as follows. Let $(p_i)_{1 \leq i \leq k}$ be the sequence of intersection points between paths in $\mathcal{P}$ ordered starting from the upper-most left-most intersection point and ending on the lower-most right-most intersection point, by reading each line from left to right. More precisely, $p_i = (x_i, y_i) < (x_j, y_j) = p_j$ if $y_i > y_j$, or $y_i = y_j$ and $x_i < x_j$. We say that two paths $P$ and $P'$ are \textit{equivalent}, denoted $P \equiv P'$, if $\min \{i \in \{1, \dots ,k\}: p_i \in \partial (P)\} = \min \{i \in \{1, \dots ,k\}: p_i \in \partial (P')\} = j$ and $P$ and $P'$ use the same grid-edge incident to $p_j$. It is easy to see that $\equiv$ is an equivalence relation on the set of paths in $\mathcal{P}$. We then define a total order $\preceq$ on the set of equivalence classes as follows. For any two distinct equivalence classes $[P]$ and $[P']$, $[P] \prec [P']$ if $\min \{i \in \{1, \dots ,k\}: p_i \in \partial (P)\} < \min \{i \in \{1, \dots ,k\}: p_i \in \partial (P')\}$ or $\min \{i \in \{1, \ldots ,k\}: p_i \in \partial (P)\} = \min \{i \in \{1, \dots ,k\}: p_i \in \partial (P')\}$, in which case we break the tie as shown in \Cref{fig:tie} (the equivalence classes are ordered in clockwise order, where the smallest equivalence class is the one containing paths that use the grid-edge to the left of the intersection point).

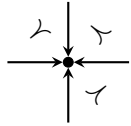
\begin{figure}[htb]
\centering
\begin{tikzpicture}[scale=.8]
\node[circ] (p) at (0,0) {};
\draw[->,>=stealth,thick] (-1,0) -- (p);
\draw[->,>=stealth,thick] (1,0) -- (p);
\draw[->,>=stealth,thick] (0,-1) -- (p);
\draw[->,>=stealth,thick] (0,1) -- (p);

\node[draw=none,rotate=40] at (-.5,.5) {\large $\prec$};
\node[draw=none,rotate=-40] at (.5,.5) {\large $\prec$};
\node[draw=none,rotate=220] at (.5,-.5) {\large $\prec$};
\end{tikzpicture}
\caption{How to break ties.}
\label{fig:tie}
\end{figure}

Consider now a caterpillar $T$ built from a path $v_{1}v_{2}\cdots v_{n}$ by attaching a pendant vertex $w_{i}$ to each internal vertex $v_{i}$ of the path. Letting $w_{1} = v_{1}$ and $w_{n} = v_{n}$, we have that $w_1, \dots, w_{n}$ are the leaves of $T$. We then map the vertices of $G$ to the leaves of $T$ so that the order $\preceq$ is preserved and by arbitrarily breaking the ties within each equivalence class. In particular, if $u,v \in V(G)$ are such that $[P_u] \prec [P_v]$, then $\delta (u) = w_i$ and $\delta (v) = w_j$ for some $i < j$. Note that the branch decomposition $(T, \delta)$ can be built in $O(n\log n)$ time by sorting the at most $2n$ intersection points which are endpoints of paths in $\mathcal{P}$. It remains to show that $\mathrm{mmw}_G(T, \delta) \leq 3t \cdot (\ell + 1)$.   

Let $e \in E(T)$. Clearly, we may assume that $e$ is not incident to a leaf and so, for some $1< s < n-1$, we have that $A_e = \{ u \in V(G): \delta (u) = w_i \text{ with } 1 \leq i \leq s\}$ and $\overline{A_e} = \{ u \in V(G): \delta (u) = w_i \text{ with } s < i \leq n\}$ are the corresponding partition classes of $V(G)$. Consider a vertex $u \in A_e$ such that, for any $v \in A_e$, either $P_v \equiv P_u$ or $[P_v] \prec [P_u]$. Let $p$ be the smallest intersection point in $\partial (P_u)$ i.e., $p = p_j$ where $j = \min \{i \in \{1, \dots ,k\}: p_i \in \partial (P_u)\}$. The grid-point $p$ naturally divides the grid $\mathcal{G}$ into two parts: the \textit{upper part}, containing $p$ and every intersection point smaller than $p$, and the \textit{lower part}, containing every intersection point larger than $p$. We denote by $\mathcal{L}$ the set of grid-edges in the upper part bordering the lower part (the red line in \Cref{fig:border}) and by $\mathcal{C}$ the set of grid-points in $\mathcal{L}$ (the grid-points on the red line in \Cref{fig:border}). Clearly, $\vert \mathcal{C} \vert \leq \ell + 1$.

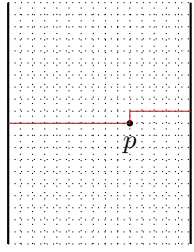
\begin{figure}[htb]
\centering
\begin{tikzpicture}[scale=.8]
\draw[step=.2,black,thin,dotted] (0,0) grid (3,4);
\draw[thick] (0,0) -- (0,4);
\draw[thick] (3,0) -- (3,4);

\node[cir,label=below:{\small $p$}] at (2,2) {};
\draw[red] (0,2) -- (2,2) -- (2,2.2) -- (3,2.2);
\end{tikzpicture}
\caption{The division of $\mathcal{G}$ induced by $p$.}
\label{fig:border}
\end{figure} 

Let $M$ be a maximum matching in $G[A_e,\overline{A_e}]$. Let $\mathcal{P}_e$ and $\overline{\mathcal{P}_e}$ be the sets of paths whose corresponding vertices belong to $A_e$ and $\overline{A_e}$, respectively, and which are matched in $M$. For any $P \in \mathcal{P}_e$, we denote by $\overline{P}$ the path in $\overline{\mathcal{P}_e}$ such that the corresponding vertex is matched in $M$ to the vertex corresponding to $P$. Consider now $P \in \mathcal{P}_e$ and $\overline{P} \in \overline{\mathcal{P}_e}$. They intersect in some grid-point $p'$. We claim that one of $P$ and $\overline{P}$ contains a point of $\mathcal{C}$. Clearly, me may assume that $p'\neq p$ and that $p$ is not an endpoint of $\overline{P}$. By construction, $P \in \mathcal{P}_e$ has at least one endpoint in the upper part. Similarly, $\overline{P}$ has at least one endpoint in the lower part, or else $\overline{P}$ has an endpoint $q$ belonging to another path (i.e., $q = p_{i}$ for some $1\leq i\leq k$) such that $q < p$ and so $\overline{P} = P_{v}$ for some $v \in \overline{A_{e}}$ with $[P_v] \prec [P_u]$, a contradiction. This implies that if $p < p'$, then $P$ contains a point of $\mathcal{C}$, and if $p' < p$, then $\overline{P}$ contains a point of $\mathcal{C}$, as claimed.

For any $P \in \mathcal{P}_e$, denote by $c_{P,\overline{P}} \in \mathcal{C}$ the left-most lower-most grid-point among those grid-points in $\mathcal{C}$ belonging to $P \cup \overline{P}$. For any $q \in \mathcal{C}$, let $S_{q} = \{P \in \mathcal{P}_e : c_{P,\overline{P}} = q  \}$. By the paragraph above, each $P \in \mathcal{P}_e$ is contained in one $S_{q}$. Consider now $q \in \mathcal{C} \setminus \{p_{0}\}$, where $p_{0}$ is the grid-point above $p$. For any $P \in S_{q}$, we have that $c_{P,\overline{P}} = q$ and, by construction, the grid-edge to the left of $q$ does not belong to $P \cup \overline{P}$. But since each grid-edge belongs to at most $t$ paths, $\vert S_q \vert \leq 3t$. Similarly, for any $P \in S_{p_{0}}$, the grid-edge below $p_{0}$ does not belong to $P \cup \overline{P}$ and so $\vert S_{p_{0}} \vert \leq 3t$. It follows that $\mathrm{cutmm}_{G}(A_{e}, \overline{A_{e}}) = \vert \mathcal{P}_e \vert = \sum_{q \in \mathcal{C}}|S_{q}| \leq 3t \cdot \vert \mathcal{C} \vert \leq 3t \cdot (\ell + 1)$.
\end{proof}

Recalling that for any graph $G$ with tree-width $\mathrm{tw}(G)$ we have $\mathrm{tw}(G) \leq 3\cdot\mathrm{mmw}(G) - 1$ \citep{Vat12}, the following holds:

\begin{corollary}\label{corVPG} Let $G$ be a VPG graph with a representation $\mathcal{R} = (\mathcal{G}, \mathcal{P})$ such that each grid-edge in $\mathcal{G}$ belongs to at most $t$ paths in $\mathcal{P}$ and $\mathcal{G}$ contains at most $\ell$ columns, for some integers $t,\ell \geq 0$. Then $\mathrm{tw}(G) \leq 9t \cdot (\ell + 1) - 1$. 
\end{corollary}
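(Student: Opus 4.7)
The plan is to derive this as an immediate consequence of \Cref{lem:boundedmimvpg} combined with the cited relation between tree-width and mm-width from \citep{Vat12}. First I would invoke \Cref{lem:boundedmimvpg} applied to the given representation $\mathcal{R} = (\mathcal{G}, \mathcal{P})$, which yields the bound $\mathrm{mmw}(G) \leq 3t \cdot (\ell + 1)$. Then I would apply the inequality $\mathrm{tw}(G) \leq 3 \cdot \mathrm{mmw}(G) - 1$ stated right before the corollary, substituting the previously obtained bound to get
\[
\mathrm{tw}(G) \leq 3 \cdot \mathrm{mmw}(G) - 1 \leq 3 \cdot 3t \cdot (\ell + 1) - 1 = 9t \cdot (\ell + 1) - 1.
\]

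There is really no obstacle here: the statement is a one-line corollary obtained by chaining two inequalities, both of which are already available to us (one proved in the immediately preceding theorem, the other quoted from \citep{Vat12}). Since the hypotheses of \Cref{lem:boundedmimvpg} match verbatim those of the corollary, no additional verification is needed, and the resulting bound is best-possible in the sense that any improvement would need to sharpen either \Cref{lem:boundedmimvpg} or the general tree-width versus mm-width comparison.
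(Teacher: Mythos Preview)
Your proposal is correct and matches the paper's own argument exactly: the corollary is obtained in one line by combining the bound $\mathrm{mmw}(G) \leq 3t(\ell+1)$ from \Cref{lem:boundedmimvpg} with the inequality $\mathrm{tw}(G) \leq 3\cdot\mathrm{mmw}(G)-1$ from \citep{Vat12}.
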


We remark that, in bounding tree-width for the graphs in \Cref{corVPG}, it seems more natural to work with branch decompositions rather than directly use tree decompositions as in the classical definition of tree-width. In fact, it appears not to be known whether tree-width admits a characterization in terms of branch decompositions \citep{Vat12}. On the other hand, the bound above is unlikely to be tight and it would be interesting to provide a direct proof. 

\begin{remark}\label{bestposs} We now observe that \Cref{lem:boundedmimvpg} is best possible in the following strong sense: both conditions on the VPG graph are necessary to guarantee boundedness of mim-width. As for the first, consider grid graphs. Since any grid graph is planar and bipartite, it admits a $B_0$-CPG representation $(\mathcal{G}, \mathcal{P})$ \citep{CKU98} and so each grid-edge in $\mathcal{G}$ belongs to at most one path in $\mathcal{P}$. On the other hand, grid graphs do not have bounded mim-width \citep{Vat12}. 

As for the second, consider split graphs. They admit a VPG representation $(\mathcal{G}, \mathcal{P})$ such that $\mathcal{G}$ contains at most $4$ columns (see \Cref{splitrep}) but they do not have bounded mim-width \citep{Men17}. 
\end{remark}

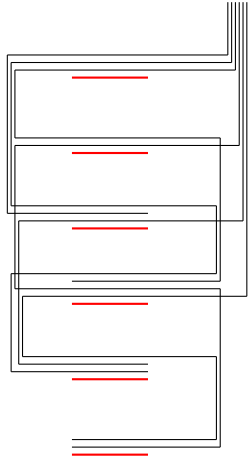
\begin{figure}
\centering
\begin{tikzpicture}
\draw[thick,red] (0,0) -- (1,0) node[midway,below] {\footnotesize }; 
\draw[thick,red] (0,1) -- (1,1) node[midway,below] {\footnotesize }; 
\draw[thick,red] (0,2) -- (1,2) node[midway,below] {\footnotesize }; 
\draw[thick,red] (0,3) -- (1,3) node[midway,below] {\footnotesize }; 
\draw[thick,red] (0,4) -- (1,4) node[midway,below] {\footnotesize }; 
\draw[thick,red] (0,5) -- (1,5) node[midway,below] {\footnotesize }; 

\draw (2.05,5.3) -- (2.05,6);
\draw (-.85,5.3) -- (2.05,5.3) node[pos=.76,above] {\footnotesize };
\draw (-.85,5.3) -- (-.85,3.2);
\draw (-.85,3.2) -- (1,3.2); 

\draw (2.15,5.1)-- (2.15,6);
\draw (-.75,5.1) -- (2.15,5.1) node[pos=.73,below] {\footnotesize };
\draw (-.75,5.1) -- (-.75,4.2);
\draw (-.75,4.2) -- (1.95,4.2);
\draw (1.95,4.2) -- (1.95,2.3); 
\draw (0,2.3) -- (1.95,2.3); 

\draw (2.1,5.2) -- (2.1,6);
\draw (-.8,5.2) -- (2.1,5.2); 
\draw (-.8,5.2) -- (-.8,3.3);
\draw (-.8,3.3) -- (1.9,3.3);
\draw (1.9,3.3) -- (1.9,2.4); 
\draw (-.8,2.4) -- (1.9,2.4);
\draw (-.8,2.4) -- (-.8,1.1) node[midway,left] {\footnotesize }; 
\draw (-.8,1.1) -- (1,1.1); 

\draw (2.2,4.1) -- (2.2,6);
\draw (-.75,4.1) -- (2.2,4.1) node[pos=.72,below] {\footnotesize };
\draw (-.75,4.1) -- (-.75,2.2);
\draw (-.75,2.2) -- (1.95,2.2);
\draw (1.95,2.2) -- (1.95,.1);
\draw (0,.1) -- (1.95,.1); 

\draw (2.25,3.1) -- (2.25,6);
\draw (-.7,3.1) -- (2.25,3.1) node[pos=.71,below] {\footnotesize };
\draw  (-.7,3.1) --  (-.7,1.2);
\draw (-.7,1.2) -- (1,1.2); 

\draw (2.3,2.1) -- (2.3,6);
\draw (-.65,2.1) -- (2.3,2.1) node[pos=.7,below] {\footnotesize };
\draw (-.65,2.1) -- (-.65,1.3);
\draw (-.65,1.3) -- (1.9,1.3);
\draw (1.9,1.3) -- (1.9,.2);
\draw (0,.2) -- (1.9,.2); 
\end{tikzpicture}
\caption{A VPG representation on a bounded number of columns of a split graph whose vertices are partitioned into an independent set $I$ and a clique $C$. Red paths correspond to vertices in $I$ and black paths correspond to vertices in $C$.}\label{splitrep}
\end{figure}

\begin{corollary}\label{algogrid} There is an algorithm that, given a VPG graph on $n$ vertices with a representation $\mathcal{R} = (\mathcal{G}, \mathcal{P})$ such that $\mathcal{G}$ contains at most $\ell$ columns and each grid-edge in $\mathcal{G}$ belongs to at most $t$ paths in $\mathcal{P}$, for some integers $\ell,t \geq 0$, solves \textsc{Independent Set} and \textsc{Dominating Set} in $O(n^{4 + 9t(\ell + 1)})$ time.
\end{corollary}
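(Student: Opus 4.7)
The plan is to obtain the algorithm by a straightforward composition of \Cref{lem:boundedmimvpg} and \Cref{algomim}, with the two bounds multiplying in the exponent. First, I would invoke the algorithmic (second) part of \Cref{lem:boundedmimvpg} on the input pair $(G, \mathcal{R})$. Since the representation is part of the input and thus must have polynomial size, in particular each $P \in \mathcal{P}$ has a polynomial number of bends, the hypothesis of that theorem is met, and we obtain in $O(n \log n)$ time a branch decomposition $(T, \delta)$ of $G$ with
\[
\mathrm{mimw}_G(T, \delta) \leq \mathrm{mmw}_G(T, \delta) \leq 3t\cdot(\ell+1).
\]

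Second, I would feed $(T, \delta)$ into the algorithm of \Cref{algomim}, which, given any branch decomposition of mim-width $w$, solves both \textsc{Independent Set} and \textsc{Dominating Set} in $O(n^{4+3w})$ time. Substituting the bound $w \le 3t\cdot(\ell+1)$ above yields a running time of
\[
O\bigl(n^{4 + 3\cdot 3t(\ell+1)}\bigr) = O\bigl(n^{4 + 9t(\ell+1)}\bigr).
\]
Adding the $O(n \log n)$ preprocessing cost for the branch decomposition is absorbed by this polynomial, so the total running time matches the bound claimed in the statement.

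Since the result is essentially a plug-in of two previously stated theorems, I do not foresee a genuine mathematical obstacle; the only point requiring a moment of care is verifying that the implicit polynomial-bends hypothesis needed by \Cref{lem:boundedmimvpg} is justified by the input convention (otherwise the representation itself would not fit in polynomial space, and the input model would be ill-defined). Once that is remarked upon, the corollary follows immediately.
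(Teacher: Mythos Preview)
Your proposal is correct and mirrors the paper's own proof almost verbatim: invoke \Cref{lem:boundedmimvpg} to obtain in $O(n\log n)$ time a branch decomposition of mim-width at most $3t(\ell+1)$, then plug this into \Cref{algomim} to get the $O(n^{4+9t(\ell+1)})$ running time. Your extra remark on the polynomial-bends hypothesis being implied by the input model is a fair clarification but not something the paper bothers to spell out in its (two-sentence) proof.
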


\begin{proof} By \Cref{lem:boundedmimvpg}, we compute in $O(n\log n)$ time a branch decomposition $(T, \delta)$ for $G$ such that $\mathrm{mimw}_G(T, \delta) \leq 3t \cdot (\ell + 1)$. The result then follows from \Cref{algomim}.
\end{proof}

The following lemma will be used together with \Cref{lem:boundedmimvpg} in the PTAS for \textsc{Dominating Set}.

\begin{lemma}
\label{lem:cliquemimwidth}
Let $G=(V,E)$ be a graph and let $S \subseteq V$. Let $G'=(V',E')$ denotes the graph with $V'=V$ and $E' = E \cup \{uv : u,v \in S\}$. If $(T, \delta)$ is a branch decomposition for $G$, then it is also a branch decomposition for $G'$ and $\mathrm{mimw}_{G'}(T, \delta) \leq \mathrm{mimw}_G(T, \delta) + 1$.
\end{lemma}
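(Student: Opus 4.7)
The plan is to take an edge $e\in E(T)$ with partition $(A_e,\overline{A_e})$ and compare a maximum induced matching in $G'[A_e,\overline{A_e}]$ to one in $G[A_e,\overline{A_e}]$. Since $V(G)=V(G')$, the same pair $(T,\delta)$ is clearly a branch decomposition for $G'$, so the only content is the bound $\mathrm{cutmim}_{G'}(A_e,\overline{A_e})\leq \mathrm{cutmim}_{G}(A_e,\overline{A_e})+1$ for every $e$.

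Fix a maximum induced matching $M'$ in $G'[A_e,\overline{A_e}]$ and split it as $M'=M_1\sqcup M_2$, where $M_1\subseteq E$ consists of the edges that already exist in $G$, and $M_2=M'\setminus E$ consists of the edges added when turning $S$ into a clique. Every edge of $M_2$ has both endpoints in $S$, so if $M_2$ contained two distinct edges $u_1v_1$ and $u_2v_2$ with $u_i\in A_e\cap S$ and $v_i\in \overline{A_e}\cap S$, then $\{u_1,v_2\}\subseteq S$ would be an edge of $G'$ connecting two vertices of different matching edges, contradicting the fact that $M'$ is induced. Hence $|M_2|\leq 1$.

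Next, I claim $M_1$ is an induced matching in $G[A_e,\overline{A_e}]$. Indeed, for any two distinct edges $f,f'\in M_1\subseteq M'$ and any endpoints $x\in f$, $y\in f'$ with $xy\neq f,f'$, the fact that $M'$ is induced in $G'$ gives $xy\notin E'$, and since $E\subseteq E'$ we also have $xy\notin E$. Therefore $|M_1|\leq \mathrm{cutmim}_G(A_e,\overline{A_e})$, and combining with $|M_2|\leq 1$ yields
\[
\mathrm{cutmim}_{G'}(A_e,\overline{A_e})=|M'|=|M_1|+|M_2|\leq \mathrm{cutmim}_G(A_e,\overline{A_e})+1.
\]
Taking the maximum over all $e\in E(T)$ proves the lemma. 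I do not anticipate any obstacle: the whole argument reduces to the one-line pigeonhole-style observation that two crossing clique-edges inside $S$ would themselves induce a forbidden edge in $M'$.
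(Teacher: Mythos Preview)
Your proof is correct and follows essentially the same approach as the paper's: both take a maximum induced matching $M'$ in $G'[A_e,\overline{A_e}]$, observe that at most one of its edges can lie in $E'\setminus E$ (since any two such edges have all endpoints in $S$ and hence are joined in $G'$), and note that the remaining edges form an induced matching in $G[A_e,\overline{A_e}]$ because $E\subseteq E'$. Your write-up is slightly more explicit in splitting $M'=M_1\sqcup M_2$ and justifying each bound, but the argument is the same.
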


\begin{proof}
Let $(T,\delta)$ be a branch decomposition for $G$. Since $G$ and $G'$ have the same vertex set, $(T,\delta)$ is a branch decomposition for $G'$ as well. Consider now an edge $e \in E(T)$ and the corresponding partition $(A_e,\overline{A_e})$ such that $\mathrm{cutmim}_{G'}(A_e,\overline{A_e})$ attains the maximum over all edges of $T$ i.e., $\mathrm{mimw}_{G'}(T,\delta) = \mathrm{cutmim}_{G'}(A_e,\overline{A_e})$, and let $M'$ be a maximum induced matching in $G'[A_e,\overline{A_e}]$. Note that $M'$ contains at most one of the edges in $E' \setminus E$ and, by possibly removing this edge, we obtain an induced matching $M$ in $G[A_e,\overline{A_e}]$. Therefore, $\mathrm{mimw}_{G'}(T,\delta) = |M'| \leq |M| + 1 \leq \mathrm{cutmim}_G (A_e,\overline{A_e}) + 1 \leq \mathrm{mimw}_G(T,\delta) + 1$.
\end{proof}

\section{PTASes}\label{ptases}

Combining the machinery developed in \Cref{mimsec} and the well-known Baker's technique, we can finally provide our PTASes for \textsc{Independent Set} and \textsc{Dominating Set}.  

\begin{theorem}
\label{thm:ptasIS} Let $t \geq 0$ and $c \geq 1$ be integers. \textsc{Independent Set} admits a PTAS when restricted to VPG graphs with a representation $\mathcal{R} = (\mathcal{G}, \mathcal{P})$ such that: 
\begin{enumerate}
\item each path in $\mathcal{P}$ has a polynomial (in $\vert \mathcal{P} \vert$) number of bends;
\item\label{2nd} each grid-edge in $\mathcal{G}$ belongs to at most $t$ paths in $\mathcal{P}$;
\item\label{3rd} the horizontal part of each path in $\mathcal{P}$ has length at most $c$.
\end{enumerate} 
\end{theorem}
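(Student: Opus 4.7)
The plan is to adapt Baker's shifting technique to our setting, using \Cref{algogrid} as the subroutine that solves the problem optimally on a slice of the grid containing only a bounded number of columns. Given $\varepsilon > 0$, I would set $k = \lceil (c+1)/\varepsilon \rceil$ and, writing the columns of $\mathcal{G}$ as $x_1 < x_2 < \cdots < x_N$, for each shift $s \in \{0,1,\ldots,k-1\}$ I would declare column $x_i$ to be a \emph{cut column} of shift $s$ whenever $i \equiv s \pmod{k}$. Let $R_s \subseteq V(G)$ denote the set of vertices whose corresponding path has its horizontal part meeting some cut column of shift $s$.

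The first structural step is to show that $G - R_s$ decomposes into vertex-disjoint induced subgraphs, one per \emph{band} consisting of a maximal run of consecutive non-cut columns. Indeed, since the horizontal part of any surviving path is an interval disjoint from all cut columns, it lies entirely inside a single band; and two paths whose horizontal parts lie in different bands have disjoint $x$-coordinates and thus cannot share a grid-point. Using \Cref{clm:induced}, one can extract in polynomial time, for each band, a VPG representation of the corresponding induced subgraph whose grid has at most $k-1$ columns, each grid-edge still lies on at most $t$ paths, and the polynomial bound on the number of bends is preserved. \Cref{algogrid} then solves \textsc{Independent Set} on each band in $O(n^{4+9tk})$ time, and taking the union of the band optima yields an optimum independent set of $G - R_s$.

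The approximation analysis is a standard Baker-style averaging. Let $S^{*}$ be an optimum independent set of $G$. Each vertex $v \in S^{*}$ has a horizontal part covering at most $c+1$ consecutive columns, hence $v$ belongs to $R_s$ for at most $c+1$ values of $s$, so $\sum_{s=0}^{k-1}|S^{*} \cap R_s| \leq (c+1)|S^{*}|$. By averaging, there is a shift $s^{*}$ with $|S^{*} \cap R_{s^{*}}| \leq \frac{c+1}{k}|S^{*}| \leq \varepsilon\,\alpha(G)$, so an optimum independent set of $G - R_{s^{*}}$ has size at least $(1-\varepsilon)\,\alpha(G)$. Running the above subroutine for each of the $k$ shifts and returning the largest output therefore produces a $(1-\varepsilon)$-approximation in time polynomial in $n$ for fixed $\varepsilon$, $c$, $t$.

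The main obstacle I anticipate is not the averaging step, which is routine, but the bookkeeping needed to turn the high-level decomposition into a formally correct polynomial-time algorithm: one must check, using the data structure of \Cref{prel} together with \Cref{clm:induced}, that a VPG representation of each band can indeed be produced efficiently, that the restricted grid genuinely has at most $k-1$ columns (so that the exponent $9t(\ell+1)$ appearing in \Cref{algogrid} becomes a function of $\varepsilon$, $c$ and $t$ only), and that the edge-multiplicity bound $t$ and the polynomial bend bound are inherited by induced subgraphs. These verifications are essentially mechanical but rely on the explicit encoding set up in \Cref{prel}.
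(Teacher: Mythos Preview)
Your proposal is correct and follows essentially the same approach as the paper: Baker's shifting technique combined with \Cref{algogrid} to solve each slice exactly. The only cosmetic difference is that the paper cuts at grid-\emph{edges} (removing the sets $V_d=\bigcup_\ell X_{d+\ell kc}$ of paths crossing certain column-to-column edges, with $k=\lceil 1/\varepsilon\rceil$ and $kc$ shifts), whereas you cut at \emph{columns} (removing paths whose horizontal part meets a designated column, with $k=\lceil(c+1)/\varepsilon\rceil$ shifts); both choices yield slices on $O(c/\varepsilon)$ columns and the same averaging bound, so the resulting running times and approximation guarantees match up to constants.
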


\begin{proof}
Let $G$ be a VPG graph on $n$ vertices with a representation $\mathcal{R} = (\mathcal{G}, \mathcal{P})$ satisfying the three conditions above. Without loss of generality, we may assume that all the paths in $\mathcal{P}$ contain only grid-points with non-negative coordinates. Moreover, we may assume that $G$ is connected. Therefore, no column in $\mathcal{G}$ is unused and so $\mathcal{G}$ has at most $(c+1)n$ columns. Further note that since any path $P \in \mathcal{P}$ has a polynomial (in $n$) number of bends, the sequence $s(P)$ has polynomial size and we can compute the horizontal part $h(P) = [x^P_{\min},x^P_{\max}]$ of $P$ in polynomial time. Given $0 < \varepsilon < 1$, we fix $k = \lceil 1/\varepsilon \rceil$. 

For any $i \in \mathbb{N}$, we denote by $X_i$ the set of vertices whose corresponding path contains a grid-edge $[(i,j),(i+1,j)]$ for some $j \in \mathbb{N}$. Notice that $X_{i} = \{v \in V(G) : x^{P_{v}}_{\min} \leq i < i+1 \leq x^{P_{v}}_{\max}\}$ and so we can compute the at most $(c+1)n -1$ non-empty sets $X_{i}'s$ in polynomial time. For any $d \in \{0, \ldots, kc-1\}$, let $V_d = \bigcup_{\ell \in \mathbb{N}} X_{d + \ell kc}$ be the set of vertices whose corresponding path contains a grid-edge $[(d +\ell kc,j),(d+\ell kc +1,j)]$ for some $\ell,j \in \mathbb{N}$. We now claim that, for any $d \in \{0,\ldots,kc-1\}$, $G - V_d$ is disconnected. Indeed, after deleting $V_d$, no vertex whose horizontal part is contained in the interval $[0, d + \ell kc]$ can be adjacent to a vertex whose horizontal part is contained in the interval $[d + \ell kc+1, (c+1)n]$. Similarly, every component of $G - V_d$ admits a VPG representation in which the number of columns is bounded by $kc$. By \Cref{algogrid}, for each component of $G - V_d$, we compute a maximum-size independent set in $O(n^{4 + 9t(kc+1)})$ time. The union $U_d$ of these independent sets over the components of $G - V_d$ is then an independent set of $G$ and, after repeating the procedure above for each $d \in \{0,\ldots, kc-1\}$, we return the largest set $U$ among the $U_d$'s.

It remains to show that $|U| \geq (1 - \varepsilon)|\mathsf{OPT}|$, where $\mathsf{OPT}$ denotes an optimal solution of \textsc{Independent Set} with instance $G$. Note that, for any $d \in \{0,\ldots , kc-1\}$, $\mathsf{OPT} \cap V_d$ is the set of vertices in $\mathsf{OPT}$ whose corresponding path contains a grid-edge $[(d +\ell kc,j),(d+\ell kc +1,j)]$ for some $\ell,j \in \mathbb{N}$. Since the horizontal part of each path has length at most $c$, we have that every vertex in $\mathsf{OPT}$ belongs to at most $c$ distinct $V_d$'s. Therefore, denoting by  $d_0$ the index attaining $\min_{d \in \{0, \ldots, kc-1\}} \vert \mathsf{OPT} \cap V_d \vert$, we have $$kc \vert \mathsf{OPT} \cap V_{d_0} \vert \leq \sum_{d=0}^{kc-1} \vert \mathsf{OPT} \cap V_d \vert \leq c \vert \mathsf{OPT} \vert$$ and so $$|\mathsf{OPT}| = |\mathsf{OPT}\setminus V_{d_0}| + |\mathsf{OPT} \cap V_{d_0}| \leq |U| + \varepsilon|\mathsf{OPT}|,$$ thus concluding the proof. 
\end{proof}

\begin{theorem}
\label{thm:ptasDS}
Let $t \geq 0$ and $c \geq 1$ be integers. \textsc{Dominating Set} admits a PTAS when restricted to VPG graphs with a representation $\mathcal{R} = (\mathcal{G}, \mathcal{P})$ such that:
\begin{enumerate}
\item each path in $\mathcal{P}$ has a polynomial (in $\vert \mathcal{P} \vert$) number of bends;
\item\label{2dom} each grid-edge in $\mathcal{G}$ belongs to at most $t$ paths in $\mathcal{P}$;
\item\label{3dom} the horizontal part of each path in $\mathcal{P}$ has length at most $c$.
\end{enumerate}
\end{theorem}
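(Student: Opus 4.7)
The plan is to adapt Baker's shifting technique to \textsc{Dominating Set}, paralleling the proof of Theorem~\ref{thm:ptasIS}. The new issue compared to Independent Set is that every vertex must be dominated, so we cannot simply delete boundary-crossing vertices; instead we work with overlapping thick slices and solve a target-set (partial) dominating subproblem on each, using Lemma~\ref{lem:cliquemimwidth} to keep the auxiliary graph's mim-width bounded.

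Given $G$ on $n$ vertices with representation $\mathcal{R}=(\mathcal{G},\mathcal{P})$ and $\varepsilon > 0$, we set $k = \lceil (2c+1)/(c\varepsilon)\rceil$ and, as in Theorem~\ref{thm:ptasIS}, assume $G$ is connected so $\mathcal{G}$ has $O((c+1)n)$ columns. For each shift $d\in\{0,\ldots,kc-1\}$ and each $\ell\geq 0$ we define the main slice $S^d_\ell = \{v : x^{P_v}_{\min}\in [d+\ell kc,\,d+(\ell+1)kc-1]\}$ (these partition $V(G)$) and the thick slice $T^d_\ell = \{v : h(P_v)\cap [d+\ell kc-c,\,d+(\ell+1)kc+c]\neq\varnothing\}$. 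Since $|h(P_v)|\leq c$, every $G$-neighbor of a vertex of $S^d_\ell$ lies in $T^d_\ell$, and by Lemma~\ref{clm:induced} the induced subgraph $G[T^d_\ell]$ has a VPG representation on at most $(k+2)c+1$ columns, computable in polynomial time from $\mathcal{R}$.

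For each $\ell$ with $S^d_\ell\neq\varnothing$, we compute a minimum set $D^d_\ell\subseteq T^d_\ell$ dominating $S^d_\ell$ in $G$ as follows. We form $G^d_\ell$ from $G[T^d_\ell]$ by turning $T^d_\ell\setminus S^d_\ell$ into a clique. By Theorem~\ref{lem:boundedmimvpg} and Lemma~\ref{lem:cliquemimwidth}, $G^d_\ell$ admits a branch decomposition of mim-width at most $3t((k+2)c+2)+1$ computable in polynomial time. We then solve the target-set domination problem on $G^d_\ell$ using the labelled $(\sigma,\rho)$-vertex partitioning algorithm underlying Theorem~\ref{algomim} (the framework of~\citep{BTV13}), labelling vertices of $S^d_\ell$ as ``must be dominated'' and vertices of $T^d_\ell\setminus S^d_\ell$ as ``don't care''. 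Since no added edge is incident to $S^d_\ell$, $D^d_\ell$ dominates $S^d_\ell$ in $G$. We output $U_d = \bigcup_\ell D^d_\ell$, a dominating set of $G$ because each $v\in V(G)$ belongs to exactly one $S^d_\ell$; we return $U = \arg\min_d|U_d|$.

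For the analysis, let $\mathsf{OPT}$ be a minimum dominating set of $G$. Because $\mathsf{OPT}\cap T^d_\ell$ is itself a valid target-set dominator, $|D^d_\ell|\leq |\mathsf{OPT}\cap T^d_\ell|$, hence $|U_d|\leq \sum_\ell|\mathsf{OPT}\cap T^d_\ell|$. Following the counting in Theorem~\ref{thm:ptasIS}, consecutive thick slices $T^d_\ell, T^d_{\ell+1}$ overlap on an interval of length $2c<kc$, so each $u\in\mathsf{OPT}$ belongs to at most two $T^d_\ell$'s per shift and lies in two slices for at most $2c+1$ shifts; thus $\sum_d\sum_\ell|\mathsf{OPT}\cap T^d_\ell|\leq (kc+2c+1)|\mathsf{OPT}|$, and averaging gives $\min_d|U_d|\leq (1+(2c+1)/(kc))|\mathsf{OPT}|\leq (1+\varepsilon)|\mathsf{OPT}|$. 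The main obstacle is precisely the target-set subproblem: a naive reduction to plain \textsc{Dominating Set} on $G^d_\ell$ via Theorem~\ref{algomim} would incur an additive $+1$ per slice from dominating the clique $T^d_\ell\setminus S^d_\ell$, which only delivers a constant-factor approximation; solving the labelled version directly via the $(\sigma,\rho)$-LCVP algorithm of~\citep{BTV13} eliminates this slack within the same $O(n^{O(w)})$ per-slice time.
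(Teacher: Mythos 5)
Your overall Baker-style skeleton (overlapping thick slices, bounded-mim-width subproblems via \Cref{clm:induced} and \Cref{lem:boundedmimvpg}, averaging over shifts) matches the paper's, and you correctly identify the crux: the boundary vertices of a slice must not each cost an extra unit. But the way you resolve it has a genuine gap. You delegate everything to a \emph{minimum partial (target-set) domination} oracle: find a minimum $D\subseteq T^d_\ell$ dominating only the prescribed subset $S^d_\ell$. \Cref{algomim} does not provide this, and the LCVP framework of \citep{BTV13} does not literally cover it either: in an LCVP problem the partition classes are the unknowns being searched for, whereas here the ``must be dominated'' versus ``don't care'' labels are part of the input. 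Such a labelled variant can indeed be solved by essentially the same dynamic programming over neighborhood-equivalence classes, but that is an extension you would have to state and prove (or cite precisely); as written, the single step your whole argument rests on is unsupported. Note also that once you assume a partial-domination solver, turning $T^d_\ell\setminus S^d_\ell$ into a clique is pointless --- it changes neither the feasible solutions nor the need for \Cref{lem:cliquemimwidth}, since $G[T^d_\ell]$ already has bounded mim-width. The cliquing trick is precisely the paper's \emph{substitute} for the labelled solver: the paper forms, for each slice, the four graphs $G_I$, $G_L$, $G_R$, $G_{LR}$ (interior only, and interior plus one or both boundary sets with each boundary set made a clique), runs plain \textsc{Dominating Set} on each via \Cref{algomim}, and takes the minimum; a case analysis on whether $\mathsf{OPT}$ restricted to the slice meets the left/right boundary shows one of the four candidates is at most $\vert\mathsf{OPT}\cap V(i,s)\vert$. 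That device keeps everything within the black box actually established in the paper.

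Two smaller points. First, your count of ``bad'' shifts is off: a path with horizontal part of length at most $c$ meets the overlap region of two consecutive thick slices (width $2c$) for up to $3c+1$ values of $d$, not $2c+1$, so with $k=\lceil (2c+1)/(c\varepsilon)\rceil$ the final bound is $1+(3c+1)/(kc)$, which exceeds $1+\varepsilon$; this is harmless and fixed by enlarging $k$. Second, the slices $S^d_\ell$ with $\ell\geq 0$ do not cover vertices with $x^{P_v}_{\min}<d$, so the first slice needs to be extended to column $0$ (as the paper does with $V(0,s)$).
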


\begin{proof}
Let $G$ be a VPG graph on $n$ vertices with a representation $\mathcal{R} = (\mathcal{G}, \mathcal{P})$ satisfying the three conditions above. We may assume that $G$ is connected. This implies that no column in $\mathcal{G}$ is unused and so the number of columns $m$ is at most $(c+1)n$. Without loss of generality, the paths in $\mathcal{P}$ contain only grid-points with $x$-coordinates between $0$ and $m-1$ and $y$-coordinates at least $0$. As in \Cref{thm:ptasIS}, we compute in time  polynomial in $n$ the horizontal part $h(P) = [x^P_{\min},x^P_{\max}]$ of any $P \in \mathcal{P}$. Given $0 < \varepsilon < 1$, we fix $k = \lceil c(\frac{2}{\varepsilon}-1)\rceil$. Clearly, $k > c$. We finally assume that $m > k + 2c - 1$, for otherwise we can compute an exact solution by \Cref{algogrid}. We then proceed as follows.

For a fixed $s \in \{0,\ldots, k + c -1\}$, we let $r = \lceil \frac{m-s-c}{k + c} \rceil$. Notice that $r = O(n)$. For any $i \in \{0, \ldots, m-2\}$, we compute in polynomial time the set $X_i$ of vertices whose corresponding path contains a grid-edge $[(i,j),(i+1,j)]$ for some $j \in \mathbb{N}$ (we also set $X_{-1} = \varnothing$ and $X_{m-1} = \varnothing$). For any $j \in \{0,\ldots,m-1\}$, we compute in polynomial time the set $V_j$ of vertices whose corresponding path intersects column $j$ (clearly, $V_j = \{v \in V(G) : x^{P_{v}}_{\min} \leq j \leq x^{P_{v}}_{\max}\}$). Note that if $i \in \{0, \ldots, m-2\}$, then $X_i \subset V_i$ and $X_i \subset V_{i+1}$. For any $i \in \{0, \dots, r\}$, we now define $V(i, s)$ and the set $E(i, s)$ of \textit{exterior vertices} of $V(i,s)$ as follows (see \Cref{ext}):  

\begin{itemize}
\item $V(0,s) = \bigcup_{\ell = 0}^{s+c-1} V_\ell$ and $E(0,s) = X_{s+c-1}$;  
\item For $0 < i < r$, $V(i,s) = \bigcup_{\ell = 0}^{k + 2c -1} V_{(i-1) \cdot (k + c) + s + \ell}$ and $E(i, s)$ is the union of $E_L(i,s) = X_{(i-1) \cdot (k + c) + s -1}$ and $E_R(i,s) = X_{i \cdot (k + c) + s + c -1}$;
\item $V(r,s) = \bigcup_{\ell = (r-1) \cdot (k + c) + s}^{m-1} V_\ell$ and $E(r,s) = X_{(r-1) \cdot (k + c) + s-1}$.    
\end{itemize}
Moreover, for any $i \in \{0, \dots, r\}$, the set $I(i, s)$ of \textit{interior vertices} of $V(i,s)$ is defined as $V(i, s) \setminus E(i, s)$. Since the horizontal part of each path in $\mathcal{P}$ has length at most $c$ and recalling that $k > c$, it is not difficult to see that the following holds.

\begin{observation}
\label{obs:Vis}
If $|i - j| > 1$, $V(i,s) \cap V(j,s) = \varnothing$. Moreover, $V(i,s) \cap V(i+1,s) = \bigcup_{\ell = 0}^{c-1} V_{i \cdot (k + c) + s + \ell}$.
\end{observation}

\begin{figure}[htb]

\centering

\begin{tikzpicture}

\draw (.2,5) edge[bend left=10] (3,5);

\node[draw=none] at (1.6,5.4) {\footnotesize $V(i,s)$};

\draw[fill=lightgray,draw=none] (0,0) rectangle (.2,5);

\draw[thick] (0,.1) -- (0,-.1) node[left] {\scriptsize $(i-1)(k + c) + s -1$};

\draw (.2,0) -- (.2,.05);

\draw (.4,0) -- (.4,.05);

\draw[fill=lightgray,draw=none] (.6,0) rectangle (.8,5);

\draw[thick] (.6,.1) -- (.6,-.1);

\node[draw=none] at (1.6,-.3) {\scriptsize $(i-1)(k+c) + s + c-1$};

\draw[thick] (-.2,1.5) -- (.4,1.5) node[midway,below] {\scriptsize $\in E_L(i,s)$} -- (.4,2);

\draw (.8,0) -- (.8,.05);

\draw (1,0) -- (1,.05);

\draw (1.2,0) -- (1.2,.05);

\draw (1.4,0) -- (1.4,.05);

\draw (1.6,0) -- (1.6,.05);

\draw (1.8,0) -- (1.8,.05);

\draw (2,0) -- (2,.05);

\draw (2.2,0) -- (2.2,.05);

\draw (2.4,0) -- (2.4,.05);

\draw (2.6,0) -- (2.6,.05);

\draw (2.8,0) -- (2.8,.05);

\draw (-2.2,5) edge[bend left=10] (0.6,5);

\node[draw=none] at (-.8,5.4) {\footnotesize $V(i-1,s)$};

\draw[fill=lightgray,draw=none] (3,0) rectangle (3.2,5);

\draw (3,0) -- (3,.05);

\draw[thick] (3.2,.1) -- (3.2,-.1) node[right] {\scriptsize $i(k + c) + s + c$};

\draw[thick] (2.8,4) -- (3.4,4) node[midway,above] {\scriptsize $\in E_R(i,s)$} -- (3.4,3);

\draw (0,0) -- (3.2,0);

\draw[thick] (.4,3) -- (1,3) -- (1,2.5) node[midway,right] {\scriptsize $\in E_R(i-1,s)$};

\end{tikzpicture}

\caption{The exterior vertices.}\label{ext}
\end{figure}
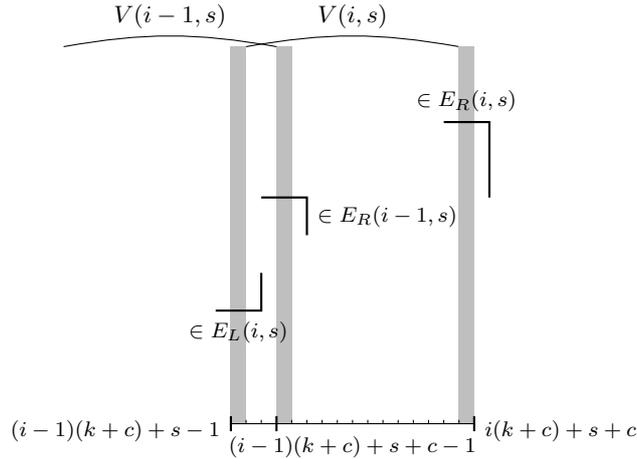

For $i \in \{0,r\}$, we now let $G_I(i,s) = G[I(i,s)]$ and let $G_{LR}(i,s)$ be the graph with vertex set $V(i, s)$ and edge set $E(G[V(i,s)]) \cup \{uv:u,v \in E(i,s)\}$. Moreover, for $0 < i < r$, we let
\begin{itemize}
\item $G_I(i,s) = G[I(i,s)]$;
\item $G_L(i,s)$ be the graph with vertex set $I_L(i,s) = I(i,s) \cup E_L(i,s)$ and edge set $E(G[I_L(i,s)]) \cup \{uv: u,v \in E_L(i,s)\}$;
\item $G_R(i,s)$ be the graph with vertex set $I_R(i,s) = I(i,s) \cup E_R(i,s)$ and edge set $E(G[I_R(i,s)]) \cup \{uv: u,v \in E_R(i,s)\}$;
\item $G_{LR}(i,s)$ be the graph with vertex set $V(i,s)$ and edge set $E(G[V(i,s)]) \cup \{uv: u,v \in E_L(i,s)\} \cup \{uv: u,v \in E_R(i,s)\}$. 
\end{itemize}

Observe that, for any $i \in \{0,\ldots,r\}$, the VPG representations of $G_I(i,s)$ and $G[V(i,s)]$ induced by $\mathcal{R}$ contain at most $k + 2c$ and $k + 4c$ columns, respectively. Therefore, by \Cref{clm:induced}, \Cref{lem:boundedmimvpg} and \Cref{lem:cliquemimwidth}, we can compute in polynomial time a branch decomposition with bounded mim-width for each of $G_I(i,s)$, $G[V(i,s)]$ and $G_{LR}(i,s)$. Similarly, for any $0 < i < r$, the VPG representations of $G[I_L(i,s)]$ and $G[I_R(i,s)]$ induced by $\mathcal{R}$ contain at most $k + 3c$ columns. Therefore, by \Cref{clm:induced}, \Cref{lem:boundedmimvpg} and \Cref{lem:cliquemimwidth}, we can compute in polynomial time a branch decomposition with bounded mim-width for each of $G_L(i,s)$ and $G_R(i,s)$. We then run the algorithm in \Cref{algomim} with these branch decompositions to compute (when the corresponding graph exists) minimum dominating sets $S_I(i,s)$, $S_L(i,s)$, $S_R(i,s)$ and $S_{LR}(i,s)$ of $G_I(i,s)$, $G_L(i,s)$, $G_R(i,s)$ and $G_{LR}(i,s)$, respectively, in polynomial time. Now, for $0 < i < r$, let $S(i,s)$ be a set of minimum cardinality among $S_I(i,s)$, $S_L(i,s)$, $S_R(i,s)$ and $S_{LR}(i,s)$. Similarly, for $i \in \{0, r\}$, let $S(i,s)$ be a set of minimum cardinality among $S_I(i,s)$ and $S_{LR}(i,s)$. We then repeat the procedure above for each fixed $s \in \{0, \ldots, k + c - 1\}$ in order to compute the sets $S_s = \bigcup_{i=0}^r S(i,s)$ and return the smallest set $S$ among the $S_s$'s in polynomial time. We will show that $S$ is a dominating set of $G$ such that $\vert S \vert \leq (1+\varepsilon) \vert \mathsf{OPT} \vert$, where $\mathsf{OPT}$ denotes an optimal solution for \textsc{Dominating Set} with instance $G$, thus concluding the proof. 

\begin{claim}\label{claimproof}
For any $s \in \{0, \ldots, k + c - 1\}$, $S_s = \bigcup_{i=0}^r S(i,s)$ is a dominating set of $G$. In particular, $S$ is a dominating set of $G$.
\end{claim}

Let $s \in \{0, \ldots, k + c - 1\}$ be fixed. We show that for any vertex $v \in V(G)$, there exists $i \in \{0,\ldots,r\}$ such that $v$ is an interior vertex of $V(i,s)$ i.e., $v \in I(i,s)$. Since $S(i,s)$ dominates every interior vertex of $V(i,s)$ by construction, \Cref{claimproof} would follow. Consider a vertex $v \in V(G)$ and let $p$ and $q$ denote the smallest and largest index $j \in \{0, \dots, m - 1\}$ such that $v \in V_j$, respectively. By definition, $v \notin X_{p-1} \cup X_{q}$. Since the horizontal part of $P_{v}$ has length at most $c$, we have $q - p \leq c$. 

Suppose first that $0 \leq p \leq s + c - 1$. If $q \leq s + c - 1$, then $v$ is an interior vertex of $V(0,s)$. Otherwise, $q > s + c - 1$, which implies that $p \geq s$ and so $v$ is an interior vertex of $V(1,s)$. Suppose now that $p > s + c - 1$ and consider the largest index $i \in \{1,\ldots,r\}$ such that $(i-1) \cdot (k + c) + s \leq p$. If $i = r$, then $v$ is an interior vertex of $V(r,s)$. Otherwise, $0 < i < r$ and by maximality of $i$ we have that $p < i \cdot (k + c) + s$, which implies that $q  <  i \cdot (k + c) + s + c$, and so $v$ is an interior vertex of $V(i,s)$. $\lozenge$

\begin{claim}\label{claim2}
For any $i \in \{0,\ldots,r\}$, $\vert S(i,s) \vert \leq \vert \mathsf{OPT} \cap V(i,s) \vert$.
\end{claim}

Since no vertex in $V(G) \setminus V(i,s)$ can dominate an interior vertex of $V(i,s)$ and $\mathsf{OPT}$ dominates every vertex in $G$, it follows that $\mathsf{OPT}(i,s)=\mathsf{OPT} \cap V(i,s)$ dominates every interior vertex of $V(i,s)$. In the following, we assume that $0 < i < r$ (the cases $i=0$ and $i=r$ can be treated similarly). We distinguish cases depending on whether $\mathsf{OPT}(i,s)$ contains vertices of $E_L(i,s)$ and $E_R(i,s)$.\\

\noindent
\textbf{Case 1.} \textit{$\mathsf{OPT}(i,s)$ contains no vertex of $E_L(i,s)$ and no vertex of $E_R(i,s)$.} As observed above, $\mathsf{OPT}(i,s)$ is a dominating set of $G_I(i,s)$ and so $\vert S(i,s) \vert \leq \gamma (G_{I}(i,s)) \leq \vert \mathsf{OPT}(i,s) \vert$.\\

\noindent
\textbf{Case 2.} \textit{$\mathsf{OPT}(i,s)$ contains a vertex of $E_L(i,s)$ and a vertex of $E_R(i,s)$.} Then, $\mathsf{OPT}(i,s)$ is a dominating set of $G_{LR}(i,s)$, as $E_L(i,s)$ and $E_R(i,s)$ are cliques in $G_{LR}(i,s)$. It follows that $\vert S(i,s) \vert \leq \gamma (G_{LR}(i,s)) \leq \vert \mathsf{OPT}(i,s) \vert$.\\

\noindent
\textbf{Case 3.} \textit{$\mathsf{OPT}(i,s)$ contains a vertex of either $E_L(i,s)$ or $E_R(i,s)$ but not of both.} Assume without loss of generality that $\mathsf{OPT}(i,s)$ contains a vertex of $E_L(i,s)$ (the other case is symmetric). Then, $\mathsf{OPT}(i,s)$ is a dominating set of $G_L(i,s)$, as $E_L(i,s)$ is a clique in $G_L(i,s)$. It follows that $\vert S(i,s) \vert \leq \gamma (G_{L}(i,s)) \leq \vert \mathsf{OPT}(i,s) \vert$.\\

\noindent In any case, we have that $\vert S(i,s) \vert \leq \vert \mathsf{OPT}(i,s) \vert$, as claimed. $\lozenge$\\

In order to conclude the proof, it is then enough to show that $\vert S \vert \leq (1+\varepsilon) \vert \mathsf{OPT} \vert$. By \Cref{claim2} we have that, for any $s \in \{0,\ldots,k + c -1\}$,

\[\vert S_s \vert \leq \sum_{i = 0}^{r} \vert S(i,s) \vert \leq \sum_{i=0}^{r} \vert \mathsf{OPT} \cap V(i,s) \vert. \]
It then follows from \Cref{obs:Vis} that 
  
\[ \sum_{i=0}^{r} \vert \mathsf{OPT} \cap V(i,s) \vert \leq \vert \mathsf{OPT} \vert + \sum_{i=0}^{r-1} \vert \mathsf{OPT} \cap V(i,s) \cap V(i+1,s) \vert. \]
On the other hand, since the horizontal part of each path has length at most $c$ and $V(i,s) \cap V(i+1,s) = \bigcup_{\ell = 0}^{c-1} V_{i \cdot (k + c) + s + \ell}$, we have that
\[\sum_{s = 0}^{k + c -1} \sum_{i=0}^{r-1} \vert \mathsf{OPT} \cap V(i,s) \cap V(i+1,s) \vert  \leq 2c\vert \mathsf{OPT} \vert, \]
which implies that
\[(k + c) \cdot \min_{s \in \{0, \ldots, k + c -1\}} \sum_{i=0}^{r-1} \vert \mathsf{OPT} \cap V(i,s) \cap V(i+1,s) \vert  \leq 2c \vert \mathsf{OPT} \vert.\]
Denoting by $s_0$ the index attaining the minimum above and combining the previous inequalities we have, 

\begin{align*}\vert S \vert \leq |S_{s_0}| \leq \sum_{i=0}^{r} \vert \mathsf{OPT} \cap V(i,s_0) \vert &\leq \vert \mathsf{OPT} \vert + \sum_{i=0}^{r-1} \vert \mathsf{OPT} \cap V(i,s_0) \cap V(i+1,s_0)\vert \\ &\leq \vert \mathsf{OPT} \vert + \frac{2c}{k + c} \vert \mathsf{OPT} \vert \\
&\leq (1+\varepsilon) \vert \mathsf{OPT} \vert,
\end{align*}
 thus concluding the proof. 
\end{proof}

\begin{remark}\label{best} \Cref{thm:ptasDS} is best possible in the sense that, if we remove one of conditions \ref{2dom} and \ref{3dom}, \textsc{Dominating Set} does not admit a PTAS, unless $\mathsf{P} = \mathsf{NP}$. 

Indeed, every split graph admits a VPG representation $\mathcal{R} = (\mathcal{G}, \mathcal{P})$ such that each path in $\mathcal{P}$ has $O(\vert \mathcal{P} \vert)$ bends and horizontal part of length at most $3$ (see \Cref{splitrep}). On the other hand, \textsc{Dominating Set} restricted to split graphs cannot be approximated to within a factor of $(1 - \varepsilon)\ln n$, for any constant $\varepsilon > 0$, unless $\mathsf{NP} \subseteq \mathsf{DTIME}(n^{O(\log\log n)})$ \citep{CC08}.

Moreover, every circle graph is a $1$-string $B_{1}$-VPG graph \citep{asinowski} and \textsc{Dominating Set} is $\mathsf{APX}$-hard on circle graphs \citep{DP06}. 
\end{remark}

\section{Concluding remarks and open problems}

In this paper we showed that \textsc{Independent Set} and \textsc{Dominating Set} admit PTASes on VPG graphs with a representation in which each path has polynomially many bends (in particular, on $B_{k}$-VPG graphs, for fixed $k \geq 0$) if in addition each grid-edge belongs to a bounded number of paths and the horizontal part of each path is bounded. Moreover, in the case of \textsc{Dominating Set}, we observed that this is not true if we remove any of the two constraints, unless $\mathsf{P} = \mathsf{NP}$. On the other hand, the situation remains obscure in the case of \textsc{Independent Set}: Does \Cref{thm:ptasIS} still hold if we remove one of conditions \ref{2nd} and \ref{3rd}? On a similar note, as already mentioned in \Cref{intro}, it is a major open problem to determine whether \textsc{Independent Set} admits constant-factor approximation algorithms on $B_{k}$-VPG graphs.      

In \Cref{sechard}, we showed that \textsc{Independent Set} and \textsc{Dominating Set} are $\mathsf{NP}$-complete when restricted to $B_0$-CPG graphs admitting a representation where each horizontal path has length at most $2$. It is then natural to ask what happens if horizontal and vertical paths all have length $1$. It is easy to see that a unit $B_{0}$-VPG graph is claw-free and so \textsc{Independent Set} is polynomial-time solvable \citep{Sbi80}. On the other hand, the following remains open: What is the complexity of \textsc{Dominating Set} for unit $B_0$-CPG graphs? In case the problem is in $\mathsf{P}$, is the same true for the superclass of unit $B_{0}$-VPG graphs?



\bibliographystyle{plainnat}
\bibliography{references}

\end{document}